  \providecommand\BibTeX{{%
    \normalfont B\kern-0.5em{\scshape i\kern-0.25em b}\kern-0.8em\TeX}}}
\begin{document}

\newcommand{\Ropt}{$\mathbf{r}_{\text{opt}}$~}

\title{Training DNN Models over Heterogeneous Clusters with Optimal Performance}

\author{Chengyi Nie}
\affiliation{%
  \institution{Stony Brook University}
  \city{Stony Brook}
  \state{NY}
  \country{USA}
}

\author{Jessica Maghakian}
\affiliation{%
  \institution{Stony Brook University}
  \city{Stony Brook}
  \state{NY}
  \country{USA}
}

\author{Zhenhua Liu}
\affiliation{%
  \institution{Stony Brook University}
  \city{Stony Brook}
  \state{NY}
  \country{USA}
}






\renewcommand{\shortauthors}{Nie, et al.}


\begin{abstract}
    Adjusting batch sizes and adaptively tuning other hyperparameters can significantly speed up deep neural network (DNN) training. Despite the ubiquity of heterogeneous clusters, existing adaptive DNN training techniques solely consider homogeneous environments.
    Optimizing distributed DNN training over heterogeneous clusters is technically challenging, and directly adapting existing techniques results in low utilization and poor performance. To solve this problem, we introduce Cannikin -- a novel data-parallel distributed training system. Cannikin achieves efficient and near optimal performance by accurately modeling the optimal system performance and predicting adaptive batch size training metrics for DNNs in heterogeneous clusters. 
    We implemented Cannikin in PyTorch and conducted experiments over 16 GPUs in Chameleon. Empirical results show that Cannikin reduces DNN training in heterogeneous clusters by up to $52\%$ compared to the state-of-art adaptive training system and up to $85\%$ compared to native PyTorch DistributedDataParallel. 
\end{abstract}

\begin{CCSXML}
<ccs2012>
 <concept>
  <concept_id>10010520.10010553.10010562</concept_id>
  <concept_desc>Computer systems organization~Embedded systems</concept_desc>
  <concept_significance>500</concept_significance>
 </concept>
 <concept>
  <concept_id>10010520.10010575.10010755</concept_id>
  <concept_desc>Computer systems organization~Redundancy</concept_desc>
  <concept_significance>300</concept_significance>
 </concept>
 <concept>
  <concept_id>10010520.10010553.10010554</concept_id>
  <concept_desc>Computer systems organization~Robotics</concept_desc>
  <concept_significance>100</concept_significance>
 </concept>
 <concept>
  <concept_id>10003033.10003083.10003095</concept_id>
  <concept_desc>Networks~Network reliability</concept_desc>
  <concept_significance>100</concept_significance>
 </concept>
</ccs2012>
\end{CCSXML}

\ccsdesc[500]{Computing methodologies~Distributed computing}
\ccsdesc[300]{Computing methodologies~Machine learning}
\ccsdesc{Computing methodologies~Heterogeneous computing}

\keywords{machine learning systems, heterogeneous system, load balancing, distributed training}



\maketitle

\section{Introduction}

With the explosive increase of deep learning (DL) applications in fields such as image classification~\cite{imagenet, cifar}, natural language processing~\cite{squad,googletranslate}, and recommender systems~\cite{movielens,recommendersystem}, the demand for deep neural network (DNN) training resources is doubling every six months~\cite{dltrend}. In order to achieve efficient DNN training, practitioners rely on accelerators~\cite{tpu, fpga,asic}, hyper-parameter tuning~\cite{gns}, and distributed training~\cite{li2020pytorch, tf}. Meanwhile, the short hardware update cycle results in newly released accelerators that significantly outperform previous models within a short time~\cite{moore}. Table~\ref{tab:gpus} shows the evolution of NVIDIA data center GPUs released in recent years. Each new flagship model is over two times faster than the preceding flagship data center GPU. When companies and research institutes upgrade their machine learning systems, newly released GPUs are installed before older models retire, so homogeneous environments cannot always be guaranteed when running distributed training jobs. To enhance the utilization of computing resources and speed up DNN model training in heterogeneous environments, specialized methods are required for DNN training in heterogeneous environments.

\begin{table}[h]
\caption{Evolution of NVIDIA data center GPUs}
  \small
  \begin{tabular}{cccccc}
    \toprule
    \multirow{2}* {\textbf{Model}} & \multirow{2}* {\textbf{Year}} &\multirow{2}* {\textbf{Archit.} } &\textbf{CUDA} & \textbf{Memory} & \textbf{FP16}  \\ 
    & & & \textbf{Cores}& (GB) & (TFLOPS) \\  
    \midrule
    Tesla P100 & 2016 & Pascal &3584 &16 &21.2 \\
    Tesla V100 & 2017 & Volta &5120 &16/32&31.4 \\
    A100 & 2020 & Ampere &6912 &40/80 &77.97 \\
    H100 & 2022 & Hopper &16896 &80&204.9 \\
  \bottomrule
  \end{tabular}
  \label{tab:gpus}
\end{table}


Previous work on specialized distributed training for heterogeneous environments focuses on two major schemes: data parallelism~\cite{li2020pytorch, horovod} and model parallelism~\cite{model_parall, hetpipe}. For data-parallelism heterogeneous distributed training systems, HetSeq~\cite{hetseq} manually tunes the local mini batch size for each node to balance workloads, while LB-BSP~\cite{socc20} and DLB~\cite{DLB} improve performance by iteratively tuning the workloads assigned to each worker based on the computing time of each node. On the other hand, BlueConnect~\cite{MLSYS2019_9b861925} boosts data-parallelism distributed training by optimizing communication. Existing data-parallelism systems do not jointly consider the computing and communication models for heterogeneous clusters, resulting in suboptimal performance. Model-parallelism systems~\cite{hetpipe, pipedream} pipeline the DNN model in heterogeneous clusters, which is near optimal for resource utilization with fine-tuning. But model parallelism requires a specific configuration of each node in a cluster, thus limiting scalability. Furthermore, existing model-parallelism and data-parallelism methods cannot manage the sudden changes of resources that occur in clusters with dynamic resource allocation~\cite{optimus, pollux}.

Another efficient DNN training method, adaptive batch size training, tunes hyper-parameters such as batch size and learning rate during training, significantly speeding up convergence. Previous work~\cite{pollux,adabatch,gns,kungfu} focuses on adaptive batch size training in homogeneous environments. These approaches continuously monitor system metrics and optimize the batch size according to their adaption policies. However, the adaptive batch size metrics, algorithms, and adaption policies are all designed for homogeneous environments. Directly adopting existing methods in heterogeneous clusters will cause a large margin of error for adaptive batch size training metrics measurement and prediction. To the best of our knowledge, no specialized adaptive training system for heterogeneous environments has been developed so far.

There are three main challenges in designing an automatic near-optimal training system. First, each node has a different performance model in heterogeneous clusters, which causes complexity in predicting the optimal performance and the corresponding cluster configuration. The system overhead will significantly affect the training performance for larger clusters. Second, considering data parallelism distributed training in heterogeneous clusters, given a new total batch size to the cluster in the adaptive batch size training, the optimal configuration of each node will change. Given the total batch size, previous work~\cite{socc20,DLB} iteratively tunes each node's configuration to approach the optimal performance, which is inefficient in adaptive batch size training. Third, in heterogeneous data parallelism training systems, different local batch sizes are assigned to different GPUs. This introduces challenges in accurately modeling gradient noise~\cite{gns} across the heterogeneous clusters.

Based on these insights and challenges, we study the performance model of data-parallel distributed training for heterogeneous GPU clusters and propose the optimal performance \textit{OptPerf}. For the prediction of \textit{OptPerf} and the corresponding configuration, we design Cannikin, an efficient near-optimal data-parallel distributed training system. Taking both computing and communication models into consideration, Cannikin has accurate modeling and prediction of the performance models for DL in heterogeneous clusters. With the cluster performance model learned online, Cannikin can predict \textit{OptPerf} and configuration with low overhead when the cluster is given a new total batch size. Cannikin optimizes the measurement of system parameters using inverse variance weighting of different observations from each node in the cluster. We also prove that in heterogeneous clusters we can correctly model adaptive batch size training metrics, just like systems~\cite{pollux, gns} designed for homogeneous clusters. While developed for single DNN training, Cannikin can be readily integrated with adaptive batch size training engines~\cite{pollux,kungfu} and dynamic resource allocation schedulers~\cite{pollux,optimus} for multiple jobs. The main contributions of this paper are:
\begin{itemize}
\item We are the first to consider training DNN models using adaptive batch sizes in heterogeneous clusters.
\item We deduce and predict the optimal performance, denoted as \textit{OptPerf}, along with its corresponding configuration and the optimal total batch sizes during DL training in heterogeneous clusters.
\item We design and implement Cannikin that can be easily integrated with the state-of-art adaptive batch size training systems to train the DNN models in heterogeneous clusters optimally.
\item We evaluate the performance of Cannikin in two real heterogeneous clusters using multiple popular DNN workloads and optimizers. Results highlight that Cannikin reduces DNN training by up to $52\%$ and $85\%$ in heterogeneous clusters compared to AdaptDL~\cite{pollux} and PyTorch DistributedDataParallel(DDP)~\cite{li2020pytorch}. 
\end{itemize}

\section{Background}
\subsection{Data-parallel Distributed Training}
DNN workloads are highly computing-intensive~\cite{imagenettrain}. Distributed deep learning accelerates DNN model training with multiple GPUs, by either using model parallelism and data parallelism. This paper focuses on data-parallelism distributed learning, which trains the same model on multiple GPUs and each GPU uses different data samples.  

During data-parallelism distributed training, node $i$ first trains its local mini batch by forward and backward passes for the local gradient estimation. Node $i$'s local gradient $g_i$ is:
\begin{equation}
    g_i = \frac{1}{b_i}\sum_{j=0}^{b_i-1}\nabla_{\theta} L_{x_j}(\theta),
\label{eq:localgradient}
\end{equation}
where $\theta$ is the vector of DNN weight parameters, $b_i$ is node~$i$'s local mini-batch size, $x_j$ is a sample in local mini batch, and $L$ is the loss function.

Upon the completion of the local gradient estimation, node~$i$ will aggregate its own gradient with all the local gradients calculated by other nodes in the cluster to get the gradient of the total batch (full batch of the DNN model):
\begin{equation}
    g = \frac{1}{N}\sum_{i=0}^{N-1}g_i,
\end{equation}
where $\textit{N}$ is the count of total nodes (GPUs) that train the model. Finally, each node will use the averaged gradient to update the weight parameters $\bf {w}$ for the next batch. In a DNN training system, the gradient averaging can be handled using backends such as NCCL~\cite{nccl}, MPI~\cite{mpi}, and Gloo~\cite{gloo}. 

\subsection{Adaptive Batch Size Training}
A deep learning model usually consists of millions to trillions of parameters~\cite{resnet,attentionisall,pathway}, requiring a long time for training. The selection of hyperparameters such as batch size in different convergence phases significantly impacts training efficiency. Recent work on adaptive batch size training~\cite{pollux,adabatch,gns,kungfu} greatly speeds up deep learning model training by dynamically tuning hyperparameters such as batch size and learning rate according to the gradient noise~\cite{gns}, data throughput, and other customized metrics.

To determine the most statistically efficient batch size for a training iteration, \citet{gns} propose the \emph{gradient noise scale}, an estimation of the signal-to-noise ratio of the stochastic gradient. This metric can be used to predict the most statistically efficient batch size during training. When the gradient noise is low, a small batch size can achieve a great contribution to the convergence. When the gradient noise is large, the error of the gradient estimated by a small batch would be significant. In this situation, a larger batch size could reduce the training time with little reduction of statistical efficiency.

However the optimal convergence progress is not guaranteed by using the most statistically efficient batch size, because the most statistically efficient batch size often comes with relatively low data throughput. Pollux~\cite{pollux} introduces \emph{goodput}, the product of the system throughput and statistical efficiency modeled by the gradient noise scale. Goodput optimizes training by balancing the trade-off between data throughput and statistical efficiency.

\section{The \emph{O\MakeLowercase{pt}P\MakeLowercase{erf}} of GPU Clusters}
To improve the performance of a cluster, first we need to know the optimal performance the cluster can achieve. In this section, we define and deduce the optimal performance \textit{OptPerf} of a heterogeneous GPU cluster using the metrics collected from each GPU in a general case.

\subsection{The Definition of \emph{OptPerf}}

For data-parallelism distributed training with a given total batch size $B$, \textit{OptPerf} is the optimal batch processing time a heterogeneous GPU cluster can achieve by ideally tuning each node's local mini-batch size. Consider heterogeneous GPU Cluster $A$ with a set of nodes $\mathcal{N}$, $|\mathcal{N}| = n$. For the standalone training of local mini batch size $b_i$ at node $i$, let the batch processing time without gradient synchronization among nodes be denoted by $t_i^{b_i}$. Due to hetergoeneity, we assume $t_i^{b}\neq t_j^{b}$ for any $b$ and $i\neq j$. The local mini batch sizes satisfy $\sum_{i\in\mathcal{N}}b_i = B$.
Define local mini batch size ratio ${\bf r}=[r_0, r_1, \dots, r_{n-1}]$, where $r_i = b_i / B$. In this paper, we only consider synchronized data-parallelism distributed training, meaning all nodes synchronize their updates after each batch's training. For this training model, fast nodes in the cluster always wait for the stragglers to finish local gradient estimation, hence the batch processing time $T = max\{t_0^{b_0}, t_1^{b_1}, \dots, t_{n-1}^{b_{n-1}}\}$. We can infer that there exists an optimal local mini batch size ratio \Ropt that minimizes the batch processing time $T$. We define the minimized batch processing time to be \textit{OptPerf}. To determine \textit{OptPerf} and \Ropt for heterogeneous cluster $A$ with total batch size $B$, we model the performance of GPUs in the cluster as a function of $T$, $B$, and ${\bf r}$.

\subsection{Performance Modeling}
When considering the performance model of a heterogeneous cluster, rather than simultaneously modeling the performance of heterogeneous nodes, we can instead model the performance of each GPU separately and then combine all the GPU performance models to determine the cluster's performance. In data parallel distributed training, the batch processing time is composed of the computing time for local gradient estimation and the communication time for gradient synchronization across nodes.

\subsubsection{Computing Time} The batch computing time can be separated into data loading, forward propagation, backward propagation and parameter updating. 
For any node $i$, the computing time $t_{compute}^i$ is a linear function of local batch size $b_i$~\cite{MLSYS2022_f457c545,pollux}. Furthermore, the parameter updating time remains constant regardless of variations in the local batch size. In contrast, data loading time, forward propagation time, and backpropagation time exhibit a linear relationship with batch sizes $b_i$. So for all nodes in Cluster $A$, the computing time can be expressed as:
\begin{equation}
\begin{aligned} 
    & t_{compute}^i = a_i + {P}_i, \ \forall i\in\mathcal{N}, \\
    & {a}_i = q_i b_i + s_i, \ \forall i\in\mathcal{N}, \\
    & {P}_i = k_i b_i + m_i, \ \forall i\in\mathcal{N},
\end{aligned} 
\end{equation}
where $a_i$ is the total time of parameter updating, data loading, and forward propagation, with the corresponding coefficients $q_i$ and $s_i$. ${P}_i$ is the backpropagation time, while $k_i$ and $m_i$ are coefficients related to GPU types and DL jobs. Note that within a heterogeneous GPU cluster, different GPU models exhibit varying pairs of $q_i$ and $s_i$, as well as $k_i$ and $m_i$, even when performing the same DL job. If cluster $A$ has $n$ different types of GPUs, there are $n$ different pairs of linear functions corresponding to each type of GPUs.

\subsubsection{Gradient Synchronization Time} We focus on the ring All-reduce mechanism~\cite{ringallreduce} adopted by Pytorch DistributedDataParallel~\cite{li2020pytorch}. Ring All-reduce is a synchronized communication method that starts the gradient synchronization when all nodes are ready to synchronize and ends the synchronization when all nodes finish the gradient synchronization. The gradient synchronization time $T_{comm}$ is dependent on model size (size of gradient parameters) and network status. In the scenario that the network and allocated resources are stable in a heterogeneous cluster, even though each node's network performance varies, the gradient synchronization time $T_{comm}$ is a learnable constant when we train the same job with different batch sizes.

\subsubsection{Computing and Communication Overlap} 
\label{compcommoverlap}
Modern distributed training frameworks~\cite{li2020pytorch,horovod} support the overlap between gradient computing and synchronization by separating the locally-computed gradients into buckets~\cite{li2020pytorch}. Rather than synchronizing after all nodes 
finish computing the full gradient at the end of backpropagation, each gradient bucket starts synchronization when all nodes finish computing the gradient of the same bucket. In batch processing, only the last bucket cannot overlap its synchronization with its gradient computation. So Cluster $A$'s per batch gradient synchronization time $T_{comm}$ is the sum of $T_u$, the last gradient bucket synchronization time, and $T_o$, the gradient synchronization time for all the other gradient buckets: $T_{comm}=T_{o}+T_{u}$.


For all nodes in Cluster $A$, the gradient size (model size) is determined when training starts. Different types of GPUs have the same gradient computing procedure~\cite{gpuprocess}. Even though the local batch size of each node is different, the gradient bucket will be ready for synchronization at a fixed proportion of each node's backpropagation time $P_i$. We can infer that the starting point of the gradient synchronization of node $i$ is:
\begin{equation}
     \textit{syncStart}_i = a_i + \gamma P_i,
\end{equation}
where the overlap ratio $\gamma$ is the ratio of the first bucket computing time to the total backpropagation time. The first bucket computing time is the period from the start point of the backward pass to the first bucket ready for synchronization point. The first bucket computing cannot be overlapped with the gradient synchronization. 

Although varying the local mini batch size $b_i$ changes the backpropagation time $P_i$, $T_{comm}$ is a fixed, learnable constant.
So the computing and communication overlap pattern differs when the local mini batch size varies. The first gradient bucket's ready-for-synchronization point is determined by the computing time along with $\gamma$, which is a constant that can be accurately measured through all the nodes in the cluster. The following buckets' synchronization can be blocked by the previous buckets' synchronization. To eliminate the effect of measurement error and system overhead, we only measure the first bucket's ready-for-synchronization point $\textit{syncStart}$ and assume all buckets' computing time and communication time are evenly distributed in the rest of gradient computing time and communication time. There are two possible overlap patterns of computing and communication.

\begin{figure}[h]
    \centering
    \includegraphics[width=0.45\textwidth]{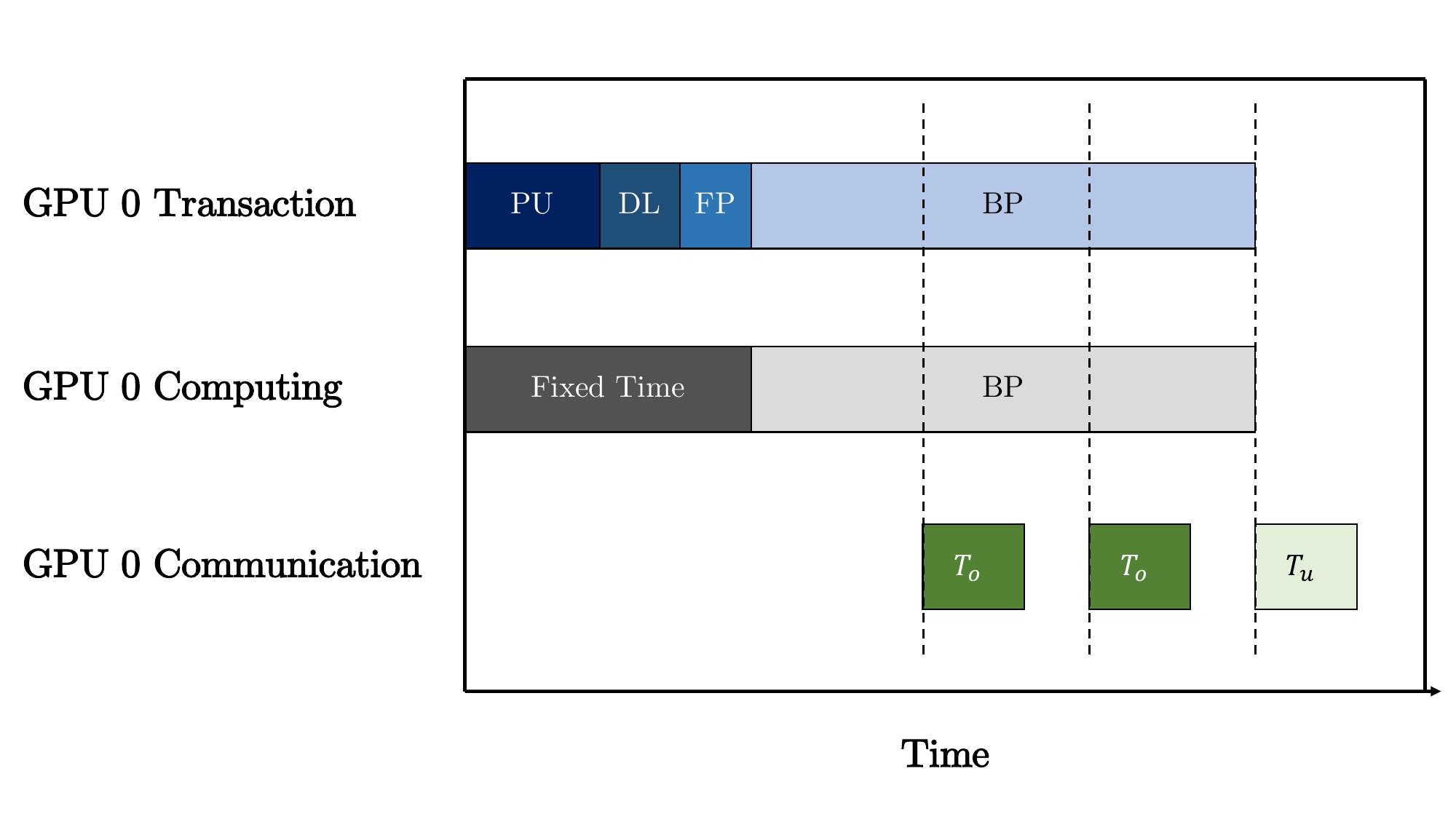}
    \caption{A node running in the computing-bottleneck situation. PU, DL, FP, and BP are the parameter update, data loading, forward propagation, and backpropagation.}
    \label{fig:computing-bottleneck}
\end{figure}

\noindent \textbf{When computing is the bottleneck.} Figure \ref{fig:computing-bottleneck} shows a computing-bottleneck node. The dashed lines show the buckets' ready-for-synchronization points. When $(1-\gamma)P_i \geq T_{o}$, node $i$'s bottleneck is the gradient computation. In this case, the gradient synchronization of each bucket finishes before the next gradient bucket is ready for synchronization, so $T_{o}$ fully overlaps with the gradient computing. The total processing time of one batch for node $i$ in cluster $A$ is:  
\begin{equation}
    T = t_{compute}^i + T_{u}.
\label{eq:compute}
\end{equation}

\begin{figure}[h]
    \centering
    \includegraphics[width=0.45\textwidth]{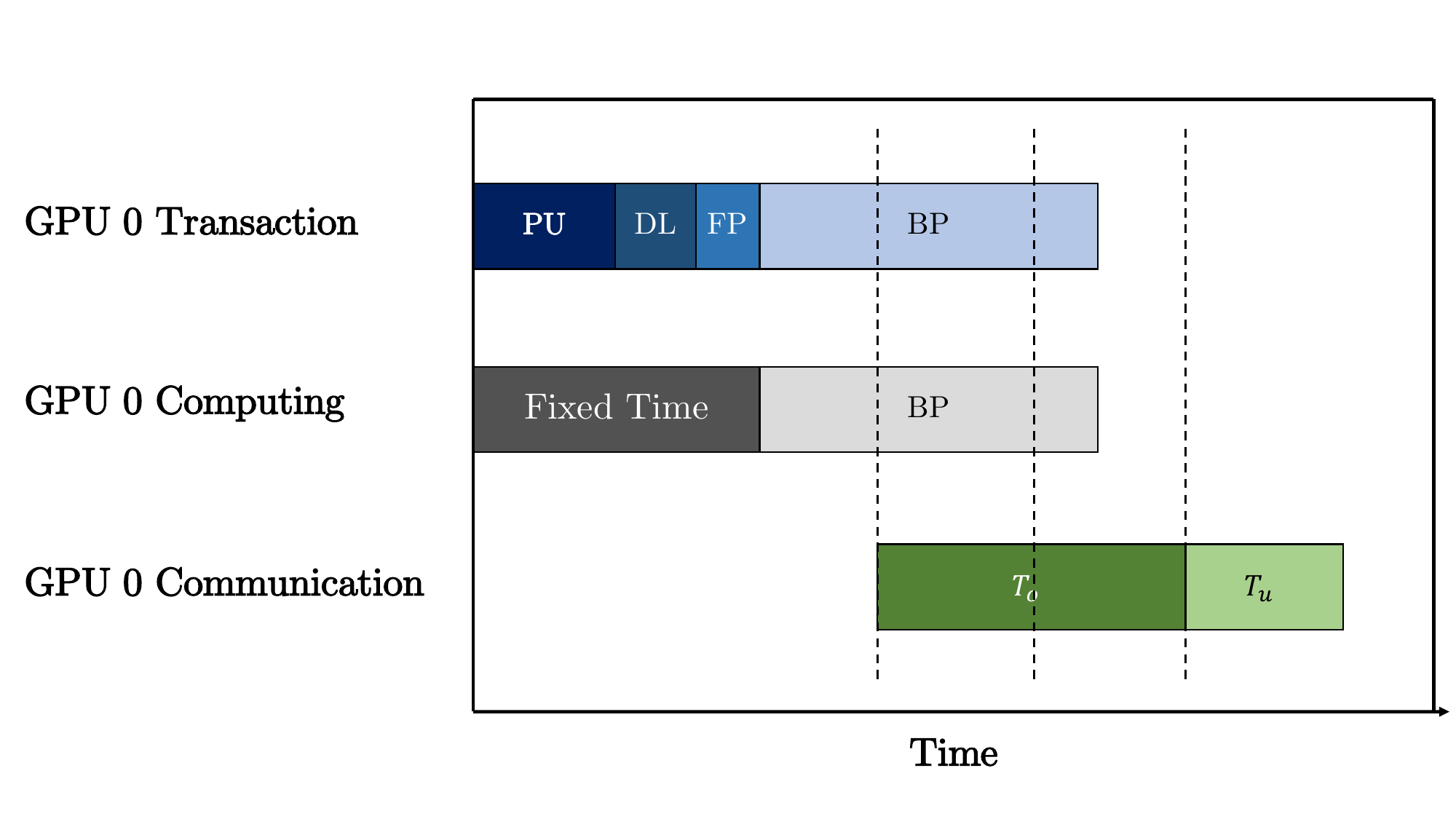}
    \caption{A communication-bottleneck node.}
    \label{fig:communication-bottleneck}
\end{figure}

\noindent \textbf{When communication is the bottleneck.} Figure \ref{fig:communication-bottleneck} shows the communication-bottleneck pattern. If $(1-\gamma)P_i < T_{o}$, node $i$'s bottleneck is the gradient communication. Although the unfinished bucket synchronization won't block gradient computing (due to the All-reduce mechanism), the synchronization of previous unfinished buckets will block the synchronization of the following bucket. In this situation, the total processing time of one batch for node $i$ is: 
\begin{equation}
    T = \textit{syncStart}_i + T_{comm}.
\label{eq:comm}
\end{equation}

\subsection{Expression of OptPerf}
\label{optferfexpress}
To predict \textit{OptPerf} for heterogeneous clusters, with learned performance models of all GPUs, first we deduce cluster $A$'s batch processing time $T$. In cluster $A$, 
\begin{equation}
T=\max\left\{\max\limits_{i\in \mathcal{N}}\{t_{compute}^i + T_{u}\},\max\limits_{i\in \mathcal{N}}\{\textit{syncStart}_i + T_{comm}\}\right\}, 
\end{equation}
where $\mathcal{N}$ is the set of nodes in Cluster $A$. Since each node's bottleneck is unknown, minimizing $T$ is a mixed integer linear programming problem. Rather than solve an NP-hard problem, we instead provide the criteria for different situations to achieve \textit{OptPerf}.

When optimizing the performance of Cluster $A$, the batch processing time $T_i$ of stragglers during training can be reduced by adjusting the local mini batch sizes for all nodes.
Intuitively, the cluster's fast nodes should be assigned larger local mini batches, while the stragglers should have smaller batches. With the computing and communication overlap patterns of each node in Section~\ref{compcommoverlap}, we first look into two special scenarios.

\smallskip

\noindent \textbf{All nodes are computing-bottleneck.} Since $(1-\gamma)P_i \geq T_{o}$, $\forall i\in\mathcal{N}$, all the nodes' performance models are Equation~\eqref{eq:compute}. \textit{OptPerf} is achieved when all nodes in cluster $A$ have the same computing time $t_{compute}$. The proof is in Appendix~\ref{app:comp_bottleneck}.

\smallskip


\noindent \textbf{All nodes are communication-bottleneck.} If $(1-\gamma)P_i < T_{o}$, $\forall i\in\mathcal{N}$, all nodes' performance models are Equation~\eqref{eq:comm}. \textit{OptPerf} will be achieved when all nodes start the first gradient bucket synchronization at the same time. The proof is in Appendix~\ref{app:comm_bottleneck}.

\smallskip


\begin{figure}[b]
    \centering
    \includegraphics[width=0.48\textwidth]{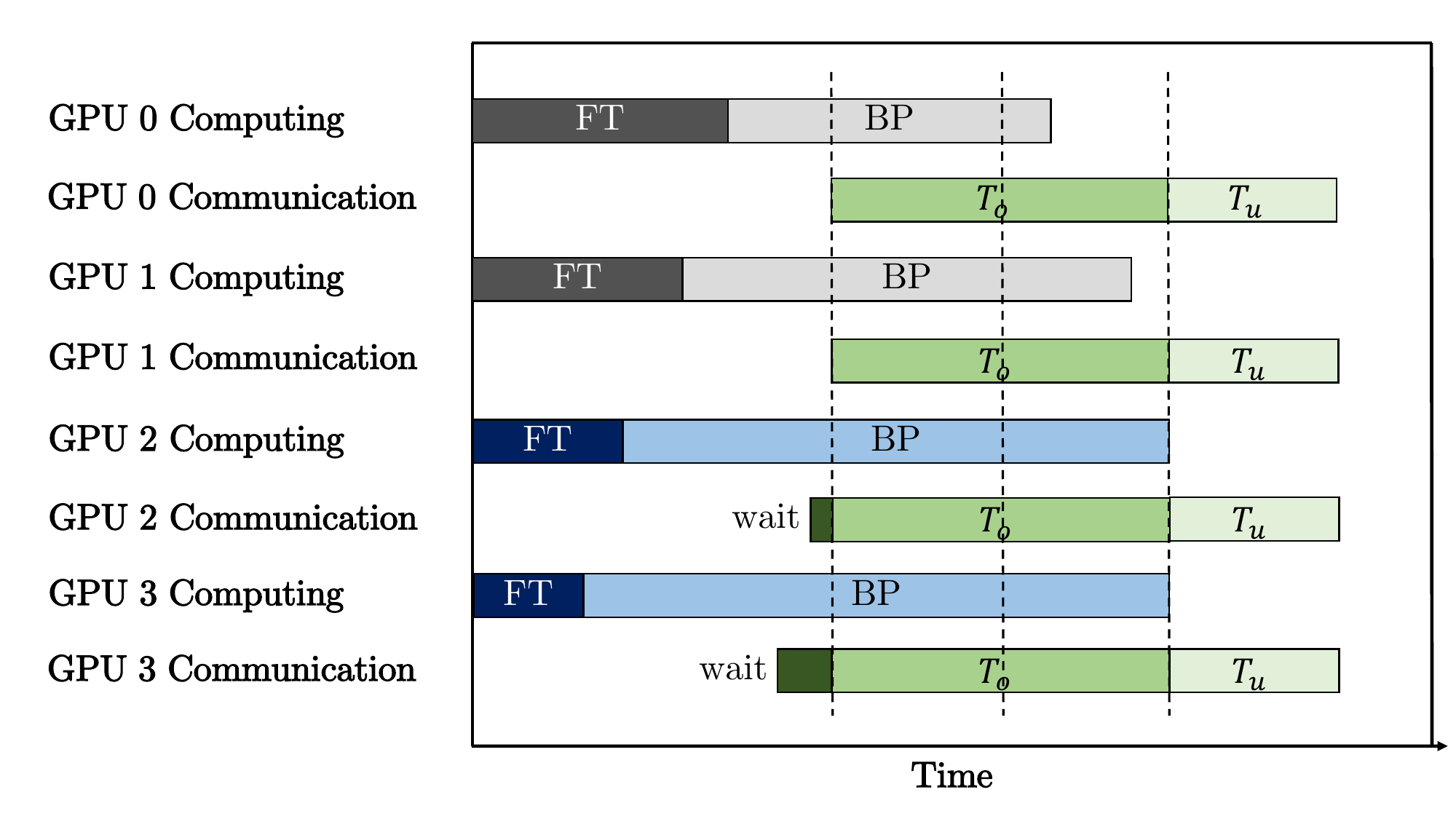}
    \caption{An example of the general case, where GPU 0 and GPU 1 are communication-bottleneck, GPU 2 and GPU 3 are computing-bottleneck. FT is the abbreviation of fixed time.}
    \label{fig:generalcase}
\end{figure}

\noindent \textbf{The general case.} In the general case, some nodes' bottlenecks are computing while others' are communication (see Figure~\ref{fig:generalcase}).

\noindent The computing-bottleneck nodes' performance models follow Equation~\eqref{eq:compute}, and the communication-bottleneck nodes' performance models follow Equation~\eqref{eq:comm}. \textit{OptPerf} is achieved when all computing-bottleneck nodes have the same computing time $t_{compute}$ and all communication-bottleneck nodes start the first bucket synchronization at the same time. Moreover, the computing and communication bottleneck nodes simultaneously get ready for the last bucket synchronization. The proof is located in Appendix~\ref{app:general_opt}. 

With \textit{OptPerf}, we can determine the optimal performance of a heterogeneous cluster with different batch sizes using the parameters measured and learned during training.

\section{System Design of Cannikin}

In this section, we give details on the workflow of Cannikin and describe how Cannikin configures the cluster before the start of each epoch, optimizes the measurement of learnable parameters, guarantees the gradient quality, and how Cannikin integrates with existing adaptive batch size training systems.

\subsection{Workflow of Cannikin}
Figure \ref{fig:workflow} shows the overview of the workflow of Cannikin. In a heterogeneous environment, after a user submits a training Job $J$ to the dynamic resource job scheduler, the job scheduler allocates a number of (possibly heterogeneous) GPUs to form Cluster $A$ to initialize Job $J$. Before the start of each epoch, the adaptive batch size engine enumerates the total batch size candidates from the batch size range~\cite{pollux}. The optimizer uses performance models learned by the analyzer to predict \textit{OptPerf} with its corresponding total batch size and \Ropt for the next training epoch, then loads each node's local mini batch based on \Ropt and starts the next training epoch. During the training epoch, each node continually collects performance metrics and learns the performance models locally. After each training epoch, the analyzer gathers the updated performance models of all nodes. 
\begin{figure}[h]
    \centering
    \includegraphics[width=0.4\textwidth]{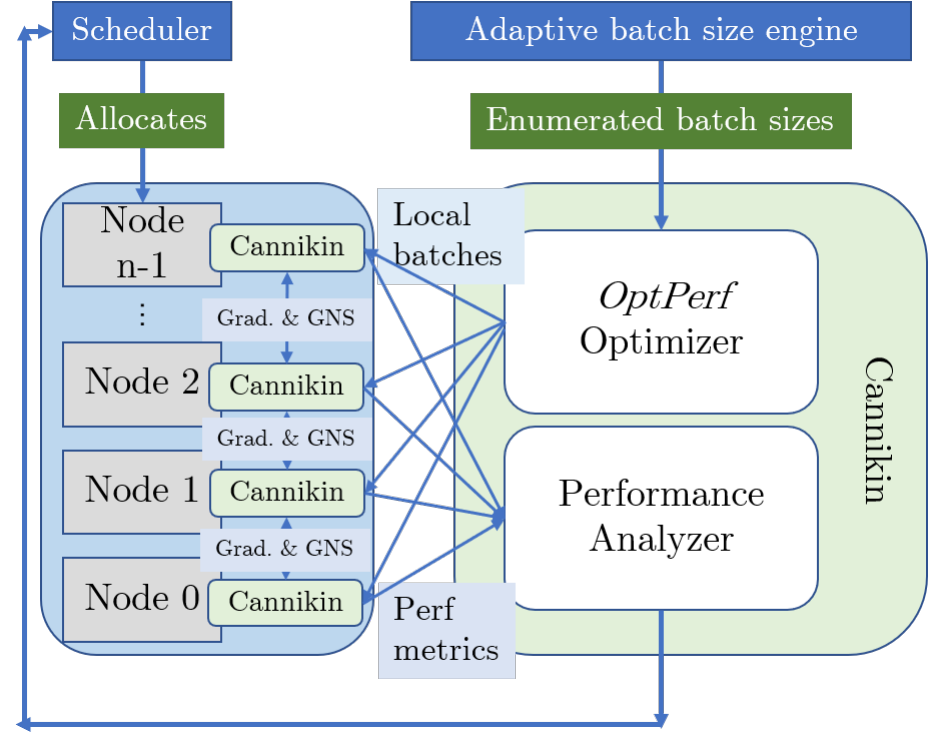}
    \caption{The overall workflow of Cannikin.}
    \label{fig:workflow}
\end{figure}


\subsection{\textit{OptPerf} Optimizer}
\label{sec:find_optperf}
Cannikin continuously learns the computing and communication models for all nodes during the training process. Despite having performance models for all nodes, the overlap state of each node in the cluster remains unknown as it is contingent on the total batch size. For example, a larger total batch size can lead to a higher number of nodes experiencing computational bottlenecks. To address this challenge, we have developed a novel search algorithm aimed at unveiling the overlap state for all nodes across the cluster.

\smallskip

\noindent \textbf{Determine the overlap state.} Given an enumerated total batch size, Cannikin uses Algorithm \ref{alg:mini batch size} to determine the overlap state, then predict \textit{OptPerf} with \Ropt for each node. 

If all nodes are computing-bottleneck or communication-bottleneck, we can use Check 1 and Check 2 to verify. 
\begin{algorithm}[t]
  \caption{Overlap state and \textit{OptPerf} configuration}
  \label{alg:mini batch size}

  {\bfseries Input:} Total batch size $B=\sum_{i=0}^{n-1}b_i$.\\
  {\bfseries Given:} $\gamma, T_o, T_u, \{a_i, k_i, m_i\}, i=0,1,\dots, n-1.$\\ $\triangleright$ Overlap ratio, two parts of communication time, computing coefficients.\\

  \textcolor{red}{/*Check 1: All nodes are computing-bottleneck*/}\\
  {\bfseries Solve} $t_{compute}=t_{compute}^0=t_{compute}^1=\dots=t_{compute}^{n-1}$.\\
  \If{$\forall (1-\gamma)P_i \geq T_{o}$}{
  \textcolor{red}{/*If all nodes are computing-bottleneck. */}\\
     \textit{OptPerf} $= t_{compute} + T_u;$\\
     {\bfseries return} \textit{OptPerf}, $b_i, i = 0, 1, \dots, n-1.$\\
     {\bfseries break}}
     \textcolor{red}{/*Check 2: If nodes are communication-bottleneck.*/}\\
     {\bfseries Solve} \textit{syncStart} $=$ \textit{syncStart}$_0=\dots=$ \textit{syncStart}$_{n-1}.$\\
    \If{$\forall (1-\gamma)P_i < T_{o}$}{
     \textcolor{red}{/*All nodes are communication-bottleneck.*/} \\
     \textit{OptPerf} $=$ \textit{syncStart} $+ T_{comm};$\\
     {\bfseries return} \textit{OptPerf}$, b_i, i = 0, 1, \dots, n-1.$\\
     {\bfseries break}}
     \textcolor{red}{/*The nodes are mixed-bottleneck.*/}\\
     
     \While{$(\exists$ \textit{syncStart}$_i>$ \textit{syncStart}$') \lor  (\exists \ t_{compute} > t_{compute}')$}{
     \textcolor{red}{/*Search the overlap state.*/} \\
     \textit{beg} $=$ \textit{outlier}$_{min};$\\
     \textit{end} $=$ \textit{outlier}$_{max};$\\
     \textit{C}$=($\textit{beg}$+$\textit{end}$)/2;$ \\ 
     \textcolor{red}{/*Node $0$ to Node $C-1$ are computing-bottleneck, Node $C$ to Node $n-1$ are comm-bottleneck.*/} \\
     {\bfseries Solve} $T_{comb}=t_{compute}' = $ \textit{syncStart}$'+T_o$.
     }     
     \textit{OptPerf}$ = T_{comb} + T_{u}.$\\
     {\bfseries return} \textit{OptPerf}$, b_i, i = 0, 1, \dots, n-1.$

\end{algorithm}
However, when nodes are mixed-bottleneck, the overhead of the enumeration method to determine each node's overlap pattern is relatively high. 
Algorithm~\ref{alg:mini batch size} addresses this issue in the following steps: If node $i$ is a computing (communication) bottleneck node in check 1 and check 2, then node $i$ is also a computing (communication) bottleneck node in the mixed-bottleneck situation. For all other outliers that have different overlap patterns in check 1 and check 2, we use a binary-search-like algorithm to determine the computing and communication bottleneck nodes. We rank all the intermediate nodes in increasing order based on the fixed processing time and then set a hypothetical bottleneck boundary node that separates the computing-bottleneck and communication-bottleneck nodes. For any overlap state, the mixed-bottleneck \textit{OptPerf} solver from Section~\ref{optferfexpress} will indicate $(\forall \ $\textit{syncStart}$\leq$ \textit{syncStart}$')$ $\land (\forall \ t_{compute} \leq t_{compute}')$ if the overlap pattern is correct. Thus we can iteratively set the middle element to be the boundary node until we find the correct overlap pattern.


Since the time complexity of checks 1 and 2 is at most $O
\left((n+1)^3\right)$ while solving linear equations~\cite{linearequations} and the time complexity of the mixed-bottleneck search algorithm is at most $O(\log n)$, Algorithm~\ref{alg:mini batch size} is $O\left((n+1)^3 \log n\right)$. In Section \ref{implementation}, we improve the time complexity of Algorithm~\ref{alg:mini batch size} to $O\left((n+1)^3\right)$.

\smallskip
\noindent \textbf{Approaching \textit{OptPerf} when no available performance models.}
As the computation time scales linearly with the local batch size of each GPU, deriving the computing time models $a_i$, $P_i$ to $b_i$ necessitates the execution of at least two distinct local batch sizes per GPU. Consequently, during the initial two epochs, there will be no available performance model to predict \textit{OptPerf}. In this scenario, we employ the inverse proportion of the sample computation time for each node to determine their respective local batch sizes for the next epoch. Assume the per sample computing time of node $i$ at the previous epoch is $t_{sample}^i=\frac{t_{compute}^i}{b_{current}^i}$, where $b_{current}^i$ is the local batch size of node $i$ in the previous epoch. The local batch size of node $i$ for the next epoch can be expressed as:
\begin{equation}
    b_{next}^i=\frac{\sum_{i\in\mathcal{N}}t_{sample}^i}{t_{sample}^i}(\sum_{i\in\mathcal{N}} \frac{\sum_{i\in\mathcal{N}}t_{sample}^i}{t_{sample}^i})^{-1}B,\label{eq:no_model}
\end{equation}


where $B$ is the total batch size for the upcoming epoch. With this method, each node can experiment with various local mini-batch sizes necessary for performance model learning, and Cannikin iteratively approaches \textit{OptPerf} when no available performance models. Note that the primary purpose of this method is to adjust each node's local batch size for performance model learning, rather than relying on the less efficient iterative method to find \textit{OptPerf}. Once the performance models are established, they are employed to predict \textit{OptPerf} before each epoch.

\subsection{Optimized Gradient Aggregation}
In homogeneous environments, the local gradient of each node can be aggregated by the cluster via averaging. This procedure guarantees each training sample has an identical weight in the global gradient after synchronization. However local gradient averaging cannot be utilized for adaptive local batch training in heterogeneous clusters due to the variety of local batch sizes. Simply averaging each node's local gradient results in over-representation of training samples from smaller local batches in the global gradient. To address this problem, LB-BSP~\cite{socc20} introduced proportional-weighted gradient aggregation. By weighting each local gradient proportionally to the local batch size, samples assigned to different nodes have identical weights in the global gradient. Cannikin computes the global gradient $g$ using:
\begin{equation}
    g = \sum_{i\in\mathcal{N}}r_i  g_i,\label{eq:grad_agg}
\end{equation}
where $g_i$ is the local gradient in Equation (\ref{eq:localgradient}) and $r_i$ is the local mini batch ratio of node $i$. For \emph{i.i.d.} data, $g$ is equivalent to the averaged gradients in homogeneous environments.


\subsection{Gradient Noise Scale in Heterogeneous Environment}
Adaptive batch size training uses the gradient noise scale (GNS)~\cite{gns} to model the convergence efficiency (statistical efficiency). The GNS, $\mathcal{B}_{noise}$, measures how large the gradient is compared to its variance: $\mathcal{B}_{noise}=\text{tr}(\Sigma)/|G|^2$,
where $\Sigma$ is the covariance matrix and $G$ is the noiseless true gradient. Since $\text{tr}(\Sigma)$ and $|G|^2$ are not available in practice, standard methods instead rely on good estimators of these two quantities. Previous work has only considered how to estimate $\text{tr}(\Sigma)$ and $|G|^2$ (and thus $\mathcal{B}_{noise}$) in homogeneous clusters.

To compute the GNS, we first construct local estimates $\mathcal{G}_i$ and $\mathcal{S}_i$ of $|G|^2$ and $\text{tr}(\Sigma)$ for each node $i$:
\begin{align}
    \mathcal{G}_i &= \frac{1}{B-b_i} (B|g|^2 - b_i|g_i|^2), \quad \mathcal{S}_i = \frac{b_i B}{B-b_i} (|g_i|^2-|g|^2)\label{eq:gi}
\end{align}
where the estimates incorporate local and global gradient information. For any batch of size $b$, the expected gradient norm $\mathbb{E}[|g_{est}|^2]$ satisfies the equality $\mathbb{E}[|g_{est}|^2]=|G|^2 + \frac{1}{b}\text{tr}(\Sigma)$~\cite{gns}. Using this equation, we can prove that $\mathcal{G}_i$ and $\mathcal{S}_i$ are unbiased estimators of $|G|^2$ and $\text{tr}(\Sigma)$, respectively. Aggregating the local estimates $\mathcal{G}_i$ and $\mathcal{S}_i$ across all nodes can provide high quality, unbiased estimates of $|G|^2$ and $\text{tr}(\Sigma)$ that have improved, lower variance. The variance of these estimators is crucial since the standard ratio estimator used for the GNS is inherently biased~\cite{gns}.

In homogeneous clusters, it is optimal to separately aggregate $\mathcal{G}_i$ and $\mathcal{S}_i$ via averaging. However in Lemma~\ref{lemma:gi_and_si_var}, we prove that the variance of both $\mathcal{G}_i$ and $\mathcal{S}_i$ depend on the local mini batch size. Furthermore, the local estimates of $\text{tr}(\Sigma)$ and $|G|^2$ for different nodes are correlated via dependence on $|g|^2$. As a result, aggregating $\mathcal{G}_i$ and $\mathcal{S}_i$ is more challenging for heterogeneous clusters. The following theorem states the optimal weighted combination of the local estimators, with the proof located in Appendix~\ref{app:gns}.
\begin{theorem}
    \label{thm:opt_g_and_s}
    $\mathcal{G} = \sum_{i\in\mathcal{N}}w^{\mathcal{G}}_i \mathcal{G}_i$ and $\mathcal{S} = \sum_{i\in\mathcal{N}}w^{\mathcal{S}}_i \mathcal{S}_i$ are minimum variance, unbiased linear estimators of $|G|^2$ and $\text{tr}(\Sigma)$ when:
    \begin{align}
        \mathbf{w}^{\mathcal{G}} = \frac{\mathbf{1}^T A_{\mathcal{G}}^{-1}}{\mathbf{1}^T A_{\mathcal{G}}^{-1}\mathbf{1}},\quad \quad \mathbf{w}^{\mathcal{S}} = \frac{\mathbf{1}^T A_{\mathcal{S}}^{-1}}{\mathbf{1}^T A_{\mathcal{S}}^{-1}\mathbf{1}},\label{eq:opt_g_and_s}
    \end{align}
    where $\mathbf{1}$ is an $n$-dimensional column vector of ones and both $A_{\mathcal{G}}$ and $A_{\mathcal{S}}$
    are $n\times n$ matrices with respective entries $a_{\mathcal{G}}(i,j)$ and $a_{\mathcal{S}}(i,j)$:
    \begin{equation*}
        a_{\mathcal{G}}(i,i) = \dfrac{B+2b_i}{B^2-Bb_i}, \quad a_{\mathcal{G}}(i,j) = \dfrac{B^2-b_i^2-b_j^2}{B(B-b_i)(B-b_j)} \text{ for } i\neq j
    \end{equation*}
    \begin{equation*}
        a_{\mathcal{S}}(i,i) = \dfrac{Bb_i}{B-b_i}, \quad a_{\mathcal{S}}(i,j) = \dfrac{b_i b_j(B-b_i-b_j)}{(B-b_i)(B-b_j)} \text{ for } i\neq j
    \end{equation*}
\end{theorem}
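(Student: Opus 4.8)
The plan is to see both claims as instances of the same best linear unbiased estimator (BLUE) problem and dispatch them by a Lagrange-multiplier argument, after computing the covariance structure of the local estimators. Consider the $\mathcal{G}$ case; the $\mathcal{S}$ case is identical with $A_{\mathcal{G}}$ replaced by $A_{\mathcal{S}}$. First I would record that each $\mathcal{G}_i$ in \eqref{eq:gi} is an unbiased estimator of $|G|^2$: applying the identity $\mathbb{E}[|g_{est}|^2]=|G|^2+\frac{1}{b}\text{tr}(\Sigma)$ to both $g$ (batch size $B$) and $g_i$ (batch size $b_i$) makes the $\text{tr}(\Sigma)$ terms cancel inside $B|g|^2-b_i|g_i|^2$, leaving $(B-b_i)|G|^2$. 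By linearity, $\mathbb{E}[\mathcal{G}]=\bigl(\sum_{i\in\mathcal{N}}w^{\mathcal{G}}_i\bigr)|G|^2$, so $\mathcal{G}$ is unbiased if and only if $\mathbf{1}^T\mathbf{w}^{\mathcal{G}}=1$. Writing $\Sigma_{\mathcal{G}}$ for the $n\times n$ covariance matrix of $(\mathcal{G}_0,\dots,\mathcal{G}_{n-1})$, we have $\text{Var}(\mathcal{G})=(\mathbf{w}^{\mathcal{G}})^T\Sigma_{\mathcal{G}}\mathbf{w}^{\mathcal{G}}$, so the theorem reduces to: minimize $\mathbf{w}^T\Sigma_{\mathcal{G}}\mathbf{w}$ subject to $\mathbf{1}^T\mathbf{w}=1$.

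Next I would solve this constrained quadratic program. As a covariance matrix $\Sigma_{\mathcal{G}}$ is positive semidefinite, and I would argue it is in fact positive definite (the $\mathcal{G}_i$ are not almost surely affinely dependent), so the objective is strictly convex and the Lagrangian $\mathbf{w}^T\Sigma_{\mathcal{G}}\mathbf{w}-\lambda(\mathbf{1}^T\mathbf{w}-1)$ has a unique stationary point. Setting its gradient to zero gives $2\Sigma_{\mathcal{G}}\mathbf{w}=\lambda\mathbf{1}$, and combining $\mathbf{w}=\frac{\lambda}{2}\Sigma_{\mathcal{G}}^{-1}\mathbf{1}$ with the constraint pins down $\frac{\lambda}{2}=(\mathbf{1}^T\Sigma_{\mathcal{G}}^{-1}\mathbf{1})^{-1}$, so $\mathbf{w}^{\mathcal{G}}=\Sigma_{\mathcal{G}}^{-1}\mathbf{1}/(\mathbf{1}^T\Sigma_{\mathcal{G}}^{-1}\mathbf{1})$; since $\Sigma_{\mathcal{G}}$ is symmetric this equals the row-vector expression $\mathbf{1}^T\Sigma_{\mathcal{G}}^{-1}/(\mathbf{1}^T\Sigma_{\mathcal{G}}^{-1}\mathbf{1})$. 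To land on exactly \eqref{eq:opt_g_and_s} it then suffices to show $\Sigma_{\mathcal{G}}=c_{\mathcal{G}}A_{\mathcal{G}}$ for some positive scalar $c_{\mathcal{G}}$ independent of the node indices, since such a scalar cancels in the ratio; the same moves applied to $\mathcal{S}$ give $\mathbf{w}^{\mathcal{S}}$ once $\Sigma_{\mathcal{S}}=c_{\mathcal{S}}A_{\mathcal{S}}$ is established.

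The real work---and the step I expect to be the main obstacle---is computing $\Sigma_{\mathcal{G}}$ and $\Sigma_{\mathcal{S}}$ entry by entry, i.e.\ the variances and covariances of quadratic forms in the gradient noise. I would write $g_i=G+\frac{1}{b_i}\sum_{j}v_j$, with $v_j$ the i.i.d.\ zero-mean per-sample gradient noise of covariance $\Sigma$, and use \eqref{eq:grad_agg} together with the disjointness (hence independence) of distinct local batches to express $g=\sum_{i}r_ig_i$, so that the only coupling between node $i$'s and node $j$'s estimators comes through the shared term $|g|^2$. Expanding $|g_i|^2$ and $|g|^2$ and collecting second-moment terms---under the Gaussian/leading-order assumptions on the per-sample gradients standard in the adaptive-batch-size literature---gives the diagonal entries, which must agree with Lemma~\ref{lemma:gi_and_si_var}, and the off-diagonal entries, where the count of how many of the $B$ samples are common to batches $i$ and $j$ is exactly what produces the numerators $B^2-b_i^2-b_j^2$ and $B-b_i-b_j$. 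The delicate points are the bookkeeping of the fourth-moment terms when expanding $\text{Var}$ and $\text{Cov}$ of squared norms and checking that every problem-data--dependent prefactor is common to all entries so that it factors out as $c_{\mathcal{G}}$ (respectively $c_{\mathcal{S}}$); a minor additional point is that, although the ratio estimator for $\mathcal{B}_{noise}$ is biased, here we only need each linear combination to be unbiased for $|G|^2$ and $\text{tr}(\Sigma)$ separately, which is precisely the constraint imposed in the first step.
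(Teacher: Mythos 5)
Your proposal is correct and follows essentially the same route as the paper's proof: unbiasedness of each $\mathcal{G}_i$ (resp.\ $\mathcal{S}_i$) reduces the problem to minimizing $\mathbf{w}^T\Sigma\,\mathbf{w}$ subject to $\mathbf{1}^T\mathbf{w}=1$, Lagrange multipliers give $\mathbf{w}=\Sigma^{-1}\mathbf{1}/(\mathbf{1}^T\Sigma^{-1}\mathbf{1})$, and the common factor $4|G|^2\text{tr}(\Sigma)$ in every variance and covariance entry cancels, so $A_{\mathcal{G}}$ and $A_{\mathcal{S}}$ may replace the covariance matrices. The covariance bookkeeping you defer as ``the real work'' is exactly what the paper carries out in Lemmas~\ref{lemma:gi_and_si_var}--\ref{lemma:cov_g_g_i}, via the delta-method approximation $\text{Var}(|g_{est}|^2)\approx 4|G|^2\text{tr}(\Sigma)/b$, independence of disjoint local batches, and the decomposition of $|g|^2$ into its $g_i$ and non-$g_i$ parts; note the local batches share no samples, so the off-diagonal entries arise solely through the shared $|g|^2$ (as you also state), not through samples common to batches $i$ and $j$.
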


Cannikin delivers a novel method to estimate the GNS $\mathcal{B}_{noise}$ in heterogeneous clusters. First, each node estimates the sum of the variances of the individual gradient components and the global norm of the gradient using~\eqref{eq:gi}. We optimally aggregate the local estimates $\mathcal{G}_i$ and $\mathcal{S}_i$ using~\eqref{eq:opt_g_and_s}, and then take the ratio of the resulting terms to get the global gradient noise scale $\mathcal{B}_{noise}=\mathcal{S}/\mathcal{G}$. Despite the added challenge of heterogeneity, Figure~\ref{fig:gns} shows Cannikin's convergence is comparable to the homogeneous baseline AdaptDL with the same training epochs, which means the larger batch size chosen by Cannikin won't harm the convergence efficiency.
\begin{figure}[h]
    \centering
    \includegraphics[width=0.48\textwidth]{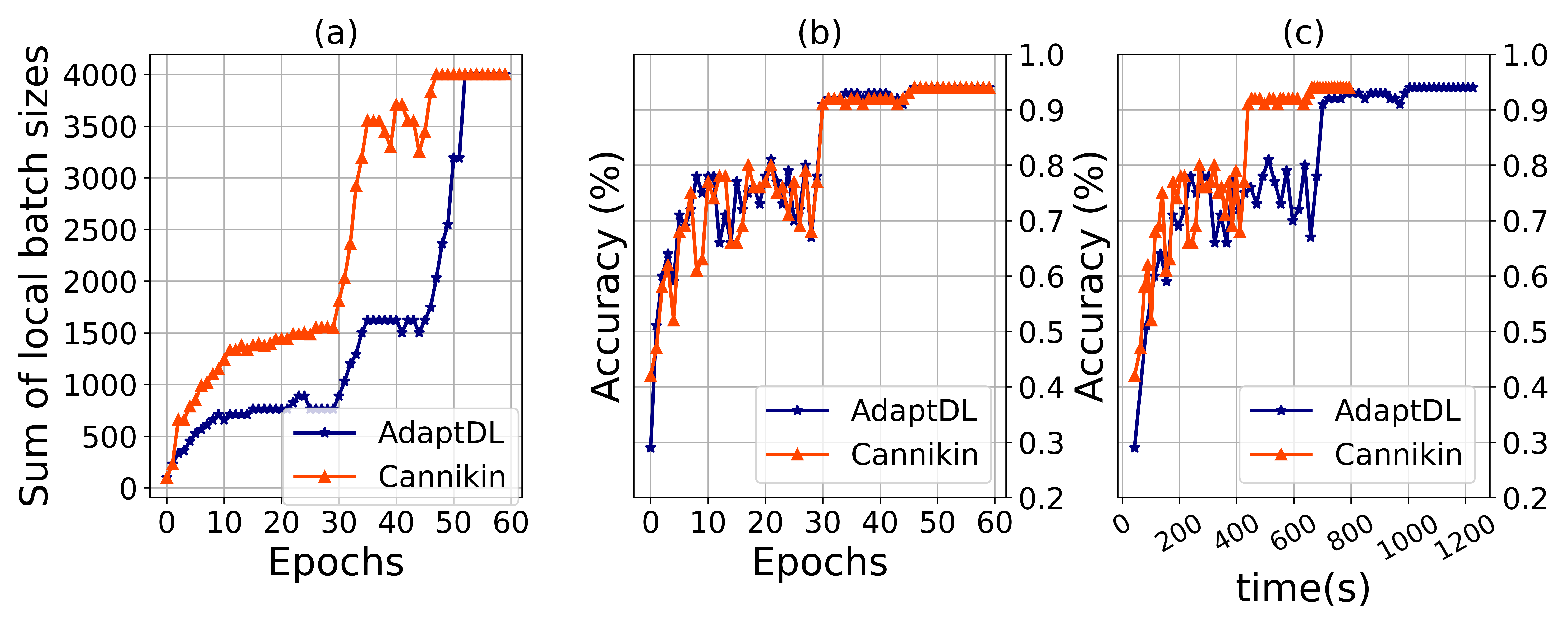}
    \caption{(a) shows the total batch size selected by Cannikin and AdaptDL in a heterogeneous cluster for the training of CIFAR10. In most epochs, Cannikin trains with a larger batch size compared with AdaptDL due to the throughput improvement. (b) shows with the same training epochs, Cannikin achieves the same accuracy compared with AdaptDL. (c) shows the accuracy in the convergence process of Cannikin and AdaptDL.}
    \label{fig:gns}
\end{figure}

\subsection{Implementation}
\label{implementation}
Cannikin is implemented as a PyTorch library based on AdaptDL~\cite{pollux} that can be imported into DNN training scripts. Cannikin introduces the \texttt{HeteroDataLoader} class, which unevenly loads local mini batches to each node based on the \textit{OptPerf} prediction. The implementation of Cannikin addresses the following concerns to improve efficiency.

\smallskip

\noindent \textbf{Parameter learning.}
During each epoch, Cannikin collects the backpropagation time ($P_i$) and the total time of data loading, optimization steps, forward propagation($a_i$) for each local batch size. With the collected data from two epochs using different local mini-batch sizes, each node can construct the computing time model to the local mini batch size by solving linear equations. In the subsequent epochs, employing more different local batch sizes, allows for the refinement of the computation time model, making it increasingly accurate.

When it comes to learning the communication time ($T_{comm}$) and $\gamma$, it's important to note that $T_{comm}$ and $\gamma$ remain constant across different local and total batch sizes. Cannikin collects the overlap ratio ($\gamma$) as well as communication times ($T_o$ and $T_u$) for each node in the cluster. We proceed to optimize the learning of $\gamma$ and $T_{comm}$ as follows.

\noindent \textbf{Optimized parameter measurement in the cluster.}
\label{optimize measurement}
Cannikin collects observation-based parameters such as overlap ratio $\gamma$ and communication time $T$ from all nodes in the cluster. However since different nodes can have different levels of noise in their measurement of $\gamma$, simply averaging across all nodes causes significant error. 
Figure~\ref{fig:overlapratio} shows the measured $\gamma$ for ResNet18 training on Cifar10 on different GPUs. The stochastic nature of distributed DNNs results in randomness when measuring $\gamma$~\cite{zeus}, which can lead to error when learning $\gamma$. 
\begin{figure}[h]
    \centering
    \includegraphics[width=0.45\textwidth]{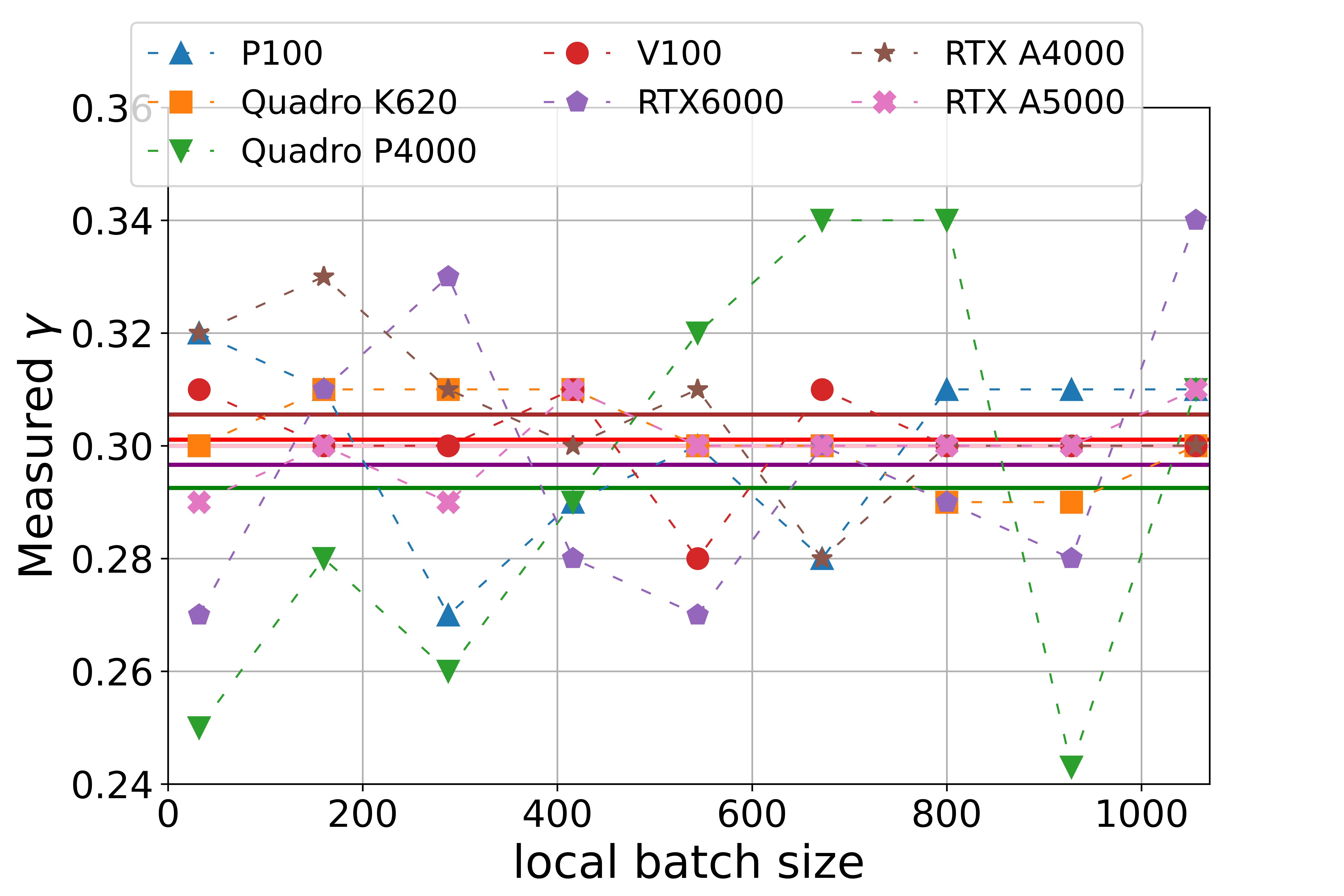}
    \caption{The measured overlap ratio $\mathbf{\gamma}$ of different local mini batch sizes validated on different types of GPUs~\cite{cloudlab}.}
    \label{fig:overlapratio}
\end{figure}
We adopt the inverse-variance weighting~\cite{inversevirance} to adjust the weighting of each node's measurement of $\gamma$:
\begin{equation}
    \gamma = \sum_{i\in\mathcal{N}}\frac{\gamma_i}{\hat{\sigma}_{\gamma_i}^2} \left(\sum_{i\in\mathcal{N}}\frac{1}{\hat{\sigma}_{\gamma_i}^2}\right)^{-1} ,
\end{equation}
where $\gamma_i$ is the overlap ratio estimation of node $i$ and $\hat{\sigma}_{\gamma_i}^2$ is the sample variance of $\gamma_i$. This estimation of $\gamma$ is optimal when observation errors are uncorrelated across nodes.

The communication time $T$ is also a fixed value across all nodes because of the ring All-reduce mechanism~\cite{ringallreduce}. However, each node reports different $T_i$ during training in heterogeneous clusters because of the wait-for-synchronization time of some nodes. In Cannikin's implementation, we use $T = \min_{i\in\mathcal{N}}\{T_i\}$ to eliminate the error in communication time measurement because this value typically corresponds to the slowest node that does not need to wait for any other node for synchronization.

\smallskip

\noindent \textbf{Total batch size selection.} Although the search algorithm is efficient in finding the overlap pattern, the overhead can be significant if we determine the overlap pattern for each total batch size candidate determined by the adaptive batch size engine~\cite{pollux} before every epoch. Cannikin instead calculates $\textit{OptPerf}_{init}$ for all batch size candidates after the initial epoch. In the upcoming epochs, since \textit{OptPerf} is unrelated to the training progress, Cannikin uses $\textit{OptPerf}_{init}$ and the updated gradient noise scale to choose the total batch size. Then Cannikin determines \textit{OptPerf} along with \Ropt according to the updated performance metrics. 
If the overlap pattern has changed from the initial pattern, Cannikin will start over to determine the pattern for each candidate again to choose the total batch size. Otherwise Cannikin will update $\textit{OptPerf}_{init}$ for the corresponding total batch size candidate. With this strategy, in most epochs Cannikin only needs to determine \textit{OptPerf} for one total batch size.

\smallskip

\noindent \textbf{Overlap state searching.} In the initialization epoch, Cannikin goes through all the total batch size candidates and calculates \textit{OptPerf} for each candidate. When the total batch size increases, more cluster nodes will be computing-bottleneck nodes. Hence in the total batch size enumeration from small to large in sequence, the search starting point of an enumerated candidate is the overlap pattern of the previous one. In the following epochs, the search starting point of an enumerated candidate is its overlap state in $\textit{OptPerf}_{init}$.

\smallskip

\noindent \textbf{Integer batch sizes.}
The batch size of each node is required to be an integer. To avoid the overhead of integer programming, Cannikin instead computes the optimal $b_i$ without enforcing the integer constraint and then rounds any decimal local mini batch sizes. In our evaluation, the error caused by the approximation is insignificant.

\section{Evaluation}
We evaluate the effectiveness of Cannikin in optimal distributed DNN training using convergence time, batch processing time, prediction accuracy, and the system overhead. Key results are as follows:
\begin{itemize}
    \item Cannikin reduced the overall convergence time by up to $85\%$ and $52\%$ in heterogeneous clusters compared with PyTorch and the adaptive batch size training system AdaptDL.
    \item Compared to the state-of-art data parallel distributed deep learning strategies for heterogeneous clusters, Cannikin reduces the batch processing time by up to $18\%$ compared to the baseline.
    \item Cannikin predicted \textit{OptPerf} for heterogeneous clusters within $7\%$ error with low overhead less than $4\%$.
\end{itemize}

\subsection{Experimental Setup}

\noindent \textbf{Testbed.} We conduct our experiments in two different clusters: cluster $A$ and cluster $B$. Cluster $A$ is a heterogeneous 3-node cluster with different types of NVIDIA GPUs specified in Table~\ref{tab:clustera}. Cluster $B$ is a heterogeneous 10-server cluster consisting of $16$ GPUs, as detailed in the table~\ref{tab:clusterb}. Note that in Cluster $B$, each GPU is a node for data-parallelism distributed DL training.

\begin{table}[h]
  \caption{Hardware specification of cluster \emph{A} in evaluation}
  \centering
    \footnotesize
  \begin{tabular}{cccccc}
    \toprule
    \small
    \textbf{Node} & \textbf{Node} &\textbf{GPU}  &\textbf{GPU} & \textbf{Main} & \multirow{2}* {\textbf{CPU}}  \\ 
    \textbf{type}& \textbf{count}& \textbf{model}& \textbf{Count}& \textbf{Memory} &  \\  
    \midrule
    a5000 & 1 & RTX A5000 &1 &32GB &i9-10980XE \\
    a4000 & 1 & RTX A4000 &1 &32GB&Xeon W-2255 \\
    p4000 & 1 & Quadro P4000 &1 &32GB &Xeon W-2102 \\

  \bottomrule
  \end{tabular}

  \label{tab:clustera}
\end{table}

\begin{table}[h]
  \caption{Hardware specification of cluster \emph{B} in evaluation}
  \centering
   \footnotesize
  \begin{tabular}{cccccc}
    \toprule
    \small
    \textbf{Node} & \textbf{Node} &\textbf{GPU}  &\textbf{GPU} & \textbf{Main} & \multirow{2}* {\textbf{CPU}}  \\ 
    \textbf{type}& \textbf{count}& \textbf{model}& \textbf{Count}& \textbf{Memory} &  \\  
    \midrule
    a100 & 1 & A100 &4 &512GB &Xeon Plati. 8380*2 \\
    v100 & 1 & V100 &4 &128GB&Xeon Gold 6230*2 \\
    rtx & 8 & RTX6000 &1 &192GB &Xeon Gold 6126*2 \\

  \bottomrule
  \end{tabular}
  \label{tab:clusterb}
\end{table}

\begin{table*}[t]
    \caption{The training models,  with the datasets ImageNet, CIFAR-10, LibriSpeech, SQuAD, and MovieLens. The model size (parameters in the model), optimizer, learning rate scaler, original batch size $\mathbf{B_0}$, and the target metrics used for Cannikin's evaluation. Note that we use fine-tuning BERT for evaluation.}
  \centering
  \begin{tabular}{ccccccccc}
    \toprule
        \textbf{Task} & \textbf{Dataset} & \textbf{Model} & \textbf{Size} & \textbf{Optimizer} & \textbf{LR scaler} & $\mathbf{B_0}$ & \textbf{Target} \\ \midrule 
        
        Image Classification & ImageNet~\cite{imagenet} & ResNet-50~\cite{resnet}&  25.6M & SGD & Adascale & 100 & 75\% Top1 acc. \\ 
        
        Image Classification & CIFAR-10~\cite{cifar} & ResNet-18~\cite{resnet}&  11M  & SGD & Adascale & 64 & 94\% Top1 acc. \\
        
        Speech Recognition & LibriSpeech~\cite{libri} &  DeepSpeech2~\cite{deepspeech}&  52M  & SGD & Adascale & 12 & WER = 40.0\% \\
        
        Question Answering & SQuAD~\cite{squad} & BERT~\cite{bert}&  110M & AdamW & Square-Root & 9 & F1 = 88\% \\ 
        
        Recommendation & MovieLens~\cite{movielens} & NeuMF~\cite{neumf} &  5.2M & Adam & Square-Root & 64 & Hit rate = 69\% \\ \bottomrule
  \end{tabular}
  \label{tab:workloads}
\end{table*}

\smallskip

\noindent \textbf{Workloads.} Evaluated workloads are listed in Table~\ref{tab:workloads}. The range of batch sizes is determined by each GPU's memory; the initial batch size is relatively small~\cite{pollux} and configured by users. For the optimizer, learning rate scaler, and target metrics choices, we adopt the canonical setting for each training task, with the philosophy of testing different models and optimizers on applications of various sizes.

\smallskip

\noindent \textbf{Baselines.} We evaluate Cannikin by comparing it with the state-of-art adaptive batch size training system, data-parallelism heterogeneous distributed DL training system, and PyTorch DDP:
\begin{itemize}
    \item AdaptDL~\cite{pollux}: The state-of-art adapted distributed DNN training system for homogeneous clusters.
    \item LB-BSP~\cite{socc20}: LB-BSP is a data-parallelism distributed training system for heterogeneous GPU clusters, which recurrently tune each node's local mini batch size for efficient model training. We set step size $\Delta=5$ in our experiments, which is identical to the original paper.
    \item PyTorch DistributedDataParallel~\cite{li2020pytorch}: Pytorch DDP is one of the most efficient distributed training libraries for homogeneous clusters.
\end{itemize}

\subsection{Performance with Heterogeneous GPUs}

\subsubsection{Overall convergence performance}
For the overall convergence performance evaluation, we compared Cannikin with the baselines in cluster $B$. Figure~\ref{fig:convergeeaxmple} shows the convergence processes of example tasks Cifar10 and Imagenet training in cluster $B$. Due to the weighted gradient aggregation, each gradient descent step of Cannikin is equivalent to the homogeneous gradient descent given the same total batch size, This equivalence guarantees that the convergence is not compromised. With this precondition, Cannikin achieves a convergence speed up from throughput improvement and the improved prediction of optimal total batch size in heterogeneous clusters, thus increasing the adaptive training system's goodput. Our results show Cannikin reduces the convergence time by $52\%$ and $29\%$ for CIFAR-10 and ImageNet compared with AdaptDL.

\begin{figure}[h]
    \centering
    \includegraphics[width=0.48\textwidth]{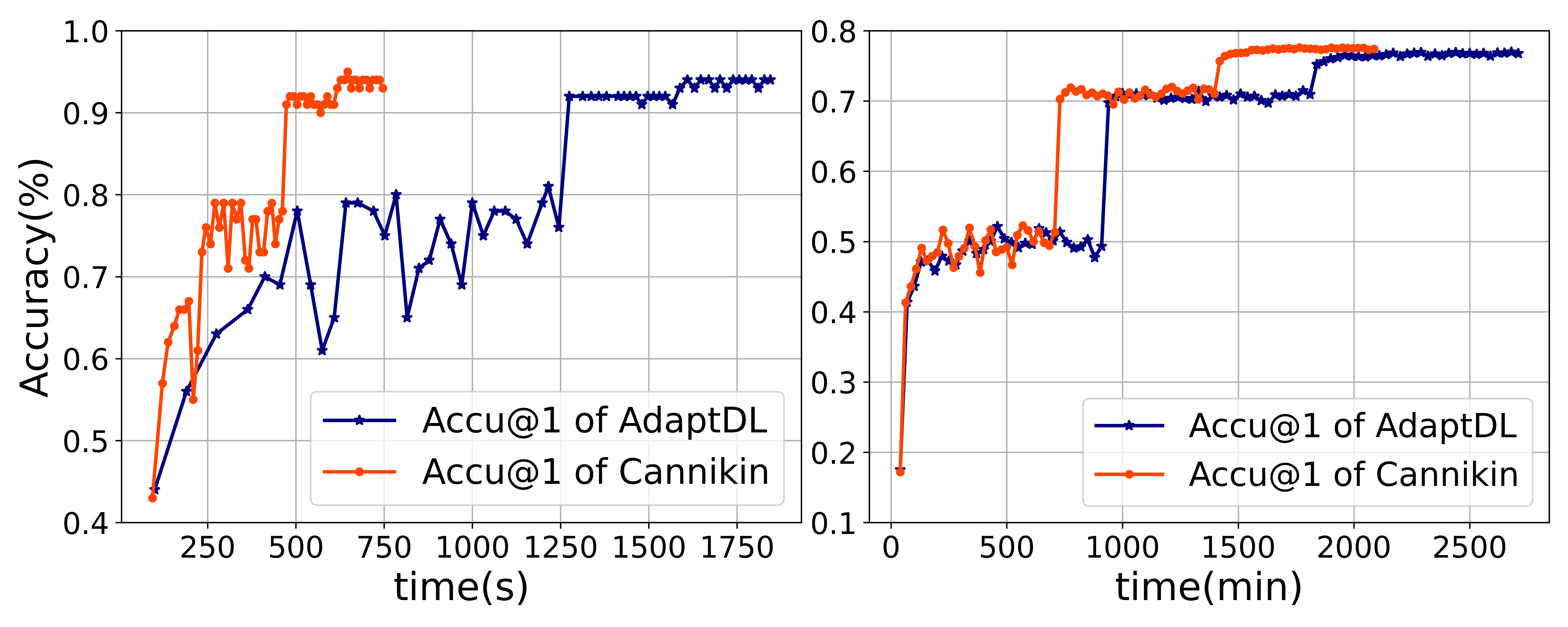}
    \caption{Convergence process of ResNet-18 on CIFAR-10 (left) and ResNet-50 on ImageNet (right).}
    \label{fig:convergeeaxmple}
\end{figure}

\begin{figure}[h]
    \centering
    \includegraphics[width=0.5\textwidth]{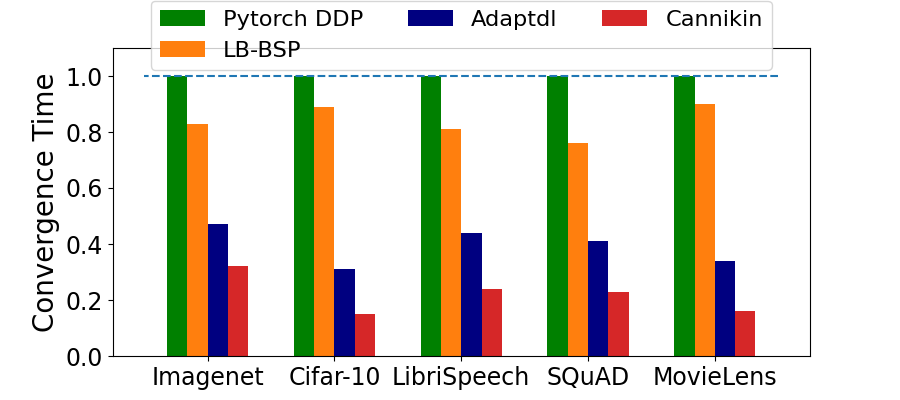}
    \caption{Normalized convergence time of all training tasks.}
    \label{fig:realcluster}
\end{figure}
The normalized overall convergence time for each evaluated workload is depicted in Figure~\ref{fig:realcluster}. In the case of PyTorch DDP, which trains deep learning models with fixed total batch size and distributes local batch sizes evenly across heterogeneous clusters, the speedup achieved by Cannikin is primarily attributed to its optimized prediction of total/local batch sizes during training. Unlike PyTorch DDP, which uses fixed batch sizes, AdaptDL evenly distributes local batch sizes across the cluster and predicts the optimal total batch size in homogeneous environments. In the context of AdaptDL, the speedup observed with Cannikin results from the optimized selection of local batch sizes to maximize the utilization of heterogeneous GPUs and the improved prediction of total batch sizes in heterogeneous environments. LB-BSP iteratively tunes local batch sizes for each GPU within heterogeneous clusters which improves the utilization of the heterogeneous GPUs. However, LB-BSP only supports the DL training with a fixed total batch size and LB-BSP hasn't consider the communication and communication overlap. The speedup with Cannikin compared to LB-BSP primarily arises from the faster determination of the optimal local batch sizes considering communication and computing overlap and optimized total batch size selection during training. The results indicate Cannikin significantly enhances in overall convergence time to achieve the target accuracy, with improvements of up to $85\%$, $52\%$, and $82\%$ compared to PyTorch DDP, AdaptDL, and LB-BSP respectively.

\subsubsection{Batch processing time for heterogeneous clusters}
We evaluate Cannikin using two methodologies for batch processing time. The first is the fixed total batch size training, i.e. classical DNN training. The second is the adaptive batch size situation when the total batch size of the cluster varies in each training epoch. Since AdaptDL's batch processing time in heterogeneous clusters is equivalent to Pytorch DDP, we don't consider AdaptDL in this section.

\smallskip

\noindent \textbf{With fixed batch size.}
We fix the total batch size of the cluster and each node's optimal local mini batch size ratio \Ropt. Figure~\ref{fig:frominit} shows an example of Cannikin and LB-BSP training ResNet-50 with ImageNet in cluster $A$. Given the total batch size of 128, Cannikin and LB-BSP initialize training by evenly assigning local batch size for each node. Cannikin approach \textit{OptPerf} as early as the third epoch, because Cannikin requires two epochs to learn the performance models discussed in Section~\ref{sec:find_optperf}. However, LB-BSP requires more than ten epochs to reach its best performance. 

\begin{figure}[h]
    \centering
    \includegraphics[width=0.5\textwidth]{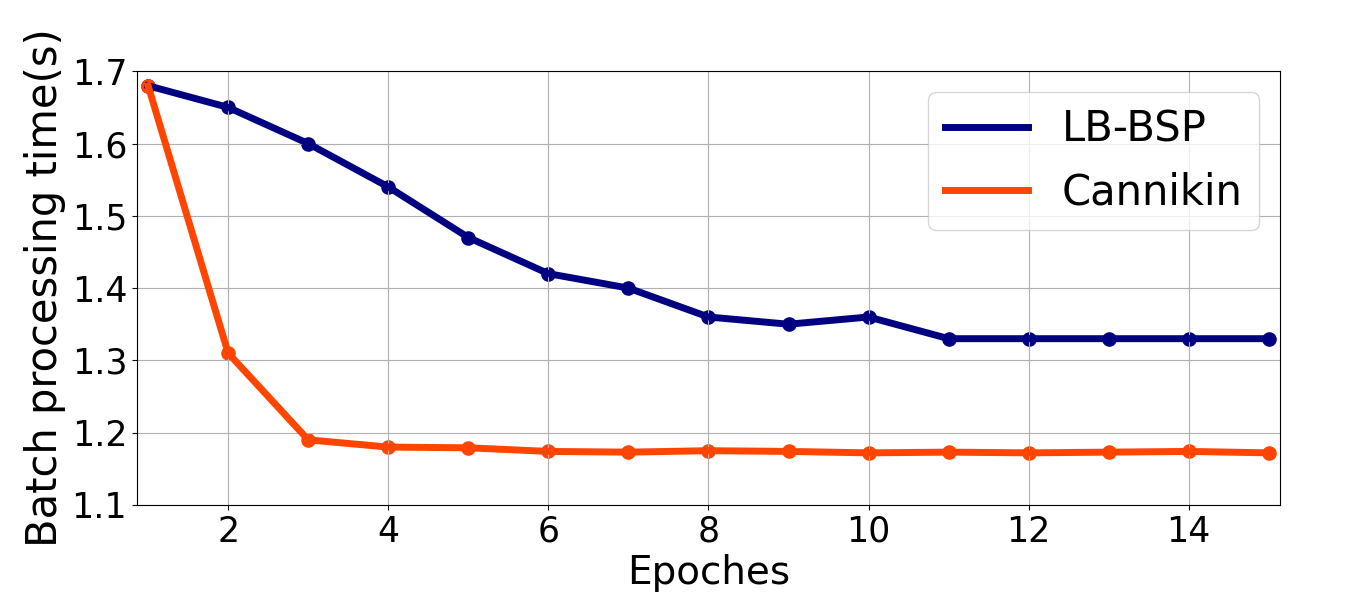}
    \caption{Cluster $\mathbf{A}$'s batch size processing time when training ImageNet from evenly assigned local mini batch size initialization given fixed total batch size 128. }
    \label{fig:frominit}
\end{figure}
\begin{figure*}[t]
    \centering
    \includegraphics[width=1\textwidth]{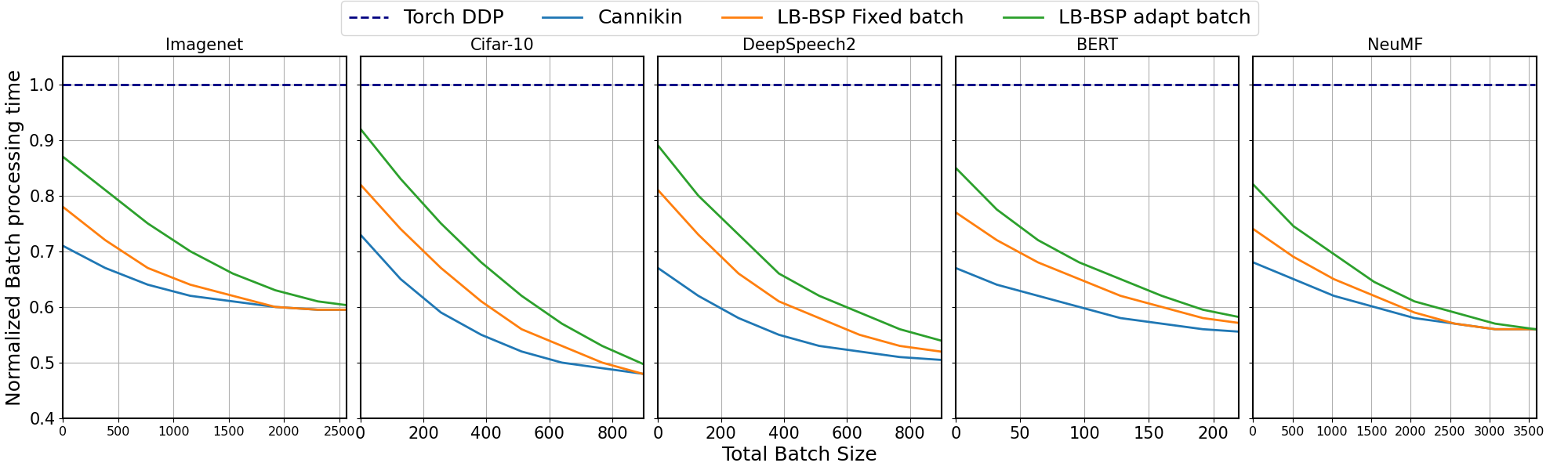}
    \caption{Each evaluation task's normalized batch processing time to the total batch size. The normalized performances of LB-BSP in fixed batch size situations and adapted batch size situations will be approached when the batch size is large enough. Because when the batch size is large, the optimal local mini batch size ratio to the total batch size of each node will approach a constant. This is also why when the batch size is large, Cannikin's normalized performance speedup will approach a constant.}
    \label{fig:batchprocess}
\end{figure*}

Assume Cannikin and each compared method have reached their best batch processing time for the given batch size. Figure \ref{fig:batchprocess} shows \textit{OptPerf} of cluster $B$ compared with the baselines. We can observe that \textit{OptPerf} is at most $18\%$ faster compared with LB-BSP, and up to 53\% faster than the Pytorch DDP. Note that when the batch size is large enough, all nodes become computing bottleneck. The \textit{OptPerf} will be achieved when all nodes have the same computing time $T_{compute}$. The performance of LB-BSP will approach \textit{OptPerf} because asymptotically the two have the same target that all nodes have the same $T_{compute}$.


\smallskip

\noindent \textbf{With adaptive batch size.}
Since only data-parallelism distributed systems are sensitive to batch size changes, we evaluate Cannikin with LB-BSP for the adaptive batch size situation. Assuming Cannikin and LB-BSP have already achieved their best performance for the previous batch size, now given a new batch size that is 10\% of the total batch size range larger than the previous one, the batch processing time of LS-BSP will become sub-optimal because the \Ropt has changed. In the meantime, Cannikin can still accurately predict the \textit{OptPerf} for the newly assigned batch size, just like the fixed batch size situation. Figure \ref{fig:batchprocess} shows the batch processing time of LS-BSP in cluster $B$ for adaptive batch size training.

\subsection{\textit{OptPerf} Prediction}

In cluster $A$, we evaluate the prediction of \textit{OptPerf} with and without inverse variance weighting in measurement compared to the manually tuned \textit{OptPerf}. Results show that without inverse variance weighting, the maximum error of \textit{OptPerf} prediction can reach up to 21\%. With the inverse variance weighting method introduced in Section~\ref{implementation}, Cannikin's prediction of \textit{OptPerf} in small and medium models like NeuMF, ResNet-18, and ResNet-50 have a maximum 3\% error. For larger models like BERT and DeepSpeech2, larger model sizes lead to more gradient buckets to synchronize, which increases the probability of contingency in gradient synchronization, so the maximum error in the prediction of \textit{OptPerf} is 7\% in the batch size range. However, Cannikin trains with varying batch sizes, so the maximum 7\% error of the \textit{OptPerf} prediction is only used in a fraction of the entire training process.

\subsection{Overhead and Scalability of Cannikin}
Table \ref{tab:overhead} shows the overhead of Cannikin for each task we deployed in the large-scale test cluster $B$. The overhead encompasses the time required to evaluate each candidate total batch size alongside its corresponding \textit{OptPerf}, as well as the configuration time for each node's local batch size and local training data index. For all the medium and large applications, the configuring time for \textit{OptPerf} of Cannikin before each epoch is much less than $1\%$ of the total epoch training time across all candidate batch sizes in the range specified by~\cite{pollux}, which is insignificant for the entire training process. For small applications like CIFAR-10 and MovieLens, the overhead of Cannikin will reach up to $9\%$ and $12\%$ when the system runs with batch sizes around the upper limit of the batch size range. However, during the training progress, the system will use the batch sizes near the upper limit only when the model almost converges. The period using the batch sizes near the upper limit for training is a minority part of training time. Considering the entire training progress, the overheads of CIFAR-10 and MovieLens are $2.7\%$ and $3.9\%$.

\begin{table}[h]
\caption{The maximum overhead for an epoch and the overall overhead for the complete training process of Cannikin.}\label{tab:overhead}
  \centering
 \small
  \begin{tabular}{cccc}
    \toprule
\multirow{2}*{\textbf{Dataset}} & \multirow{2}*{\textbf{Model}} & \textbf{Max} & \textbf{Overall} \\ 
& & \textbf{Overhead} & \textbf{Overhead} \\\midrule

ImageNet & ResNet-50 & $\ll 1\%$ & $\ll 1\%$\\ 

LibriSpeech &  DeepSpeech2& $\ll 1\%$ & $\ll 1\%$\\ 

SQuAD & BERT& $\ll 1\%$ & $\ll 1\%$\\ 

CIFAR-10 & ResNet-18 & $9\%$ & $2.7\%$\\ 

MovieLens & NeuMF & $12\%$ & $3.9\%$\\ \bottomrule

  \end{tabular}
  \label{tab:overhead}
\end{table}

\section{Discussion}

\smallskip

\noindent \textbf{Impact of varying heterogeneity.}
The performance improvement compared with the baseline depends on the heterogeneity of the cluster. Generally speaking, a cluster with more heterogeneity will get more benefits from Cannikin. In homogeneous clusters, the performance of Cannikin is identical to AdaptDL. In Cluster $B$, the fastest GPU A100 is about $3.42$ times faster compared with RTX6000 which is the slowest GPU. The degree of heterogeneity we evaluated in this paper generally exists in today's computing platform. As shown in Table~\ref{tab:gpus}, after two years, the H100 GPU is faster than A100 by more than $4$ times. We know that the A100 GPUs are not outdated and are still prevalent within the majority of deep learning clusters.

\smallskip

\noindent \textbf{Potentials with Sharing-caused heterogeneity.}
The heterogeneity can arise not only from hardware differences but also from resource sharing. Recent studies~\cite{gpu_share, antman} have introduced GPU-sharing mechanisms that enable the sharing of a single GPU's resources among multiple instances. In this context, even when the same GPU type is present within a cluster, the resources available at each node can still exhibit heterogeneity during distributed training.

We create Cluster $C$, a 16-node homogeneous cluster in Chameleon Cloud~\cite{chameleon}. Each node is equipped with one NVIDIA RTX6000 GPU. We use the container's constraint to construct the heterogeneous environment. We adopt docker containers~\cite{docker} for cluster $C$ to configure the heterogeneous environment. In each node, we start two docker containers. The first container runs Cannikin distributed training workloads, and the second docker container runs a local dummy GPU workload to share the same GPU's computing power and memory with Cannikin. To tune each node's computing power, we manually adjust the local dummy GPU workload's batch size to change the computing power and memory of Cannikin workloads. 

For the fastest node in cluster $C$, we allocate the entire RTX6000 GPU to the Cannikin container, while for the slowest node in cluster $C$, we restrict the performance of the node's Cannikin container to approximately one-fourth of an RTX6000 GPU. For the other intermediate nodes, we assign the batch size of each dummy workload to make them evenly distributed between the batch sizes of the fastest and slowest nodes. For example, in Cluster $C$ the batch size of the dummy workload in the slowest node is $150$, and in the fastest node is $0$. Then for the intermediate nodes, the dummy batch sizes are $10, 20,\dots, 140$. The results indicate that Cannikin's performance in Cluster $C$ aligns with that of Cluster $A$ and $B$. This brief experiment demonstrates the potential of Cannikin in addressing heterogeneity induced by resource sharing.

\smallskip
\noindent \textbf{Memory limitation.}
In the adaptive batch size training, the system will be initialized with a small total batch size~\cite{pollux} because when training starts, the relatively small batch size could guarantee statistical efficiency with less demand for the hardware resources. In Cannikin, this is a benefit to our system. In the experiments, we observe that powerful GPUs often outfit large GPU memory. On the other hand, weak GPUs' memory is usually relatively small. Since Cannikin evenly initializes each node's local mini batch size, if the initial total batch size is large, the nodes with weak GPU often reach their memory limitation, which causes training failure. Then in the following epoch, the total batch size increases. Cannikin will assign larger local mini batch sizes to the fast nodes and smaller local mini batch sizes to the slow nodes, which could avoid the GPU memory limit problem. We have also established local batch size constraints for each node to prevent out-of-memory failures.



\smallskip

\noindent \textbf{Adapt to schedulers for heterogeneous clusters.}
Existing dynamic resource allocating schedulers~\cite{pollux, optimus, slaq} only support the scheduling of homogeneous clusters. Sia~\cite{sia} is a scheduler with heterogeneity awareness, however, at each job level, the resource allocated for each job is still homogeneous. With Cannikin, we fully utilize the computing resources in heterogeneous clusters. When we design schedulers in future work, the scheduler should be able to allocate a heterogeneous cluster for each job, which can significantly increase resource utilization. 

With the performance metrics of Cannikin, the scheduler optimizes multi-job performance and reallocates resources for each job between epochs. In the multiple jobs scenario, if the scheduler removes nodes for Job $J$, Cannikin can easily use the learned computing models of remaining nodes along with the communication time model proposed by \citet{pollux} to predict and configure \textit{OptPerf} for Job $J$. When adding new nodes to Cluster $A$, Cannikin will re-initialize the cluster for job $J$ with two epochs. Then in the following epochs, Job $J$ will achieve \textit{OptPerf}.

\section{Other Related Work}

\textbf{Performance modeling of DNN training.} The importance of performance modeling in deep neural network training has been shown in recent research work. Paleo~\cite{Paleo} proposed the DNN training model by studying the NN topology. Clockwork~\cite{clockwork} model the GPU runtime with tracing. The All-reduce communication between nodes is studied and modeled~\cite{allreduceperf}. For the cloud-based DNN training~\cite{MLSYS2022_f457c545}, the accurate performance modeling and prediction significantly increase the training efficiency and reduce the clients' costs. From the scheduler perspective, an accurate DNN performance modeling~\cite{pollux, optimus, allox, gavel} increases the resource utilization and improves the fairness of multiple jobs execution. However, before Cannikin, none of the previous work synthetically considered the computing and communication model with the overlap in heterogeneous environments.

\smallskip

\noindent \textbf{Systems for adapted batch size training.} The philosophy of adaptive batch size training is choosing the batch size with the most convergence efficiency during training. Previous work~\cite{adjustbatch,adabatch} gradually increase the batch size and learning rate during training. Anytime Minibatch (AMB)~\cite{anytime} automatically tunes batch size by setting a fixed computing time for each batch, and each node computes as many samples as possible. However, AMB has only been evaluated on small applications and cannot support DNN workloads so far. The state-of-art adaptive batch size training system Pollux~\cite{pollux} can automatically configure the batch size by optimizing goodput during training. However, Pollux is designed for homogeneous environments. Cannikin is the first adapted batch size training system for heterogeneous environments.

\smallskip

\noindent \textbf{Accelerating ML on heterogeneous environments.} Most previous work about ML acceleration for heterogeneous clusters is on the scheduler level for multiple jobs. Hare~\cite{hare} optimized inter- and intra-job parallelism to increase the utilization of heterogeneous clusters. EasyScale~\cite{easyscale} dynamically assigned workers to scale distributed training for heterogeneous GPUs. Gandiva~\cite{gandiva} designed a strategy for heterogeneous GPU allocation that guarantees fairness and improves utilization. GPUlet~\cite{gpulet} designed an efficient scheduler by adopting spatial partitioning of GPU resources. However, looking into the single job level for the schedulers, the training strategy is still homogeneous. For job-level optimization on heterogeneous clusters, SnuHPL~\cite{snuhpl} improved the training in a heterogeneous HPC system by optimizing the data distribution for a given cluster configuration. BytePS~\cite{byteps} accelerated DNN training by leveraging CPU and bandwidth resources, however, BytePS focuses on the heterogeneity between CPU/GPU. Cannikin is a job-level-optimized system designed for heterogeneous GPU clusters. It automatically explores and determines the optimal local batch sizes assigned to each GPU and total batch sizes.

\section{Conclusion}
By modeling the GPU computing and communication jointly, we derive the optimal performance \textit{OptPerf} for heterogeneous clusters in the general case. Then we design Cannikin, which automatically predicts and configures \textit{OptPerf} for the heterogeneous cluster with high efficiency. Cannikin is the first adaptive distributed training system for heterogeneous clusters with near-optimal performance and high scalability by overcoming challenges such as optimal scenario determination and metrics measurement caused by heterogeneity. Cannikin outperforms state-of-art training systems for diverse workloads in real heterogeneous clusters. We believe Cannikin will inspire future DL training systems and DNN job schedulers in heterogeneous environments.

\bibliographystyle{ACM-Reference-Format}
\bibliography{sample-base}

\clearpage
\appendix
\section{Proof of Optimality Conditions}

\subsection{Compute-Bottleneck Scenario}
\label{app:comp_bottleneck}
\begin{proof}
When computing is the bottleneck for all nodes, the total processing time of one batch for the cluster is $\max_{i\in\mathcal{N}}\{t_{compute}^i + T_u\}$. Since $T_u$ is the same across all nodes, we can just consider $\max_{i\in\mathcal{N}}\{a_i+k_ib_i+m_i\}$. Minimizing this quantity is equivalent to solving the following optimization problem:
\begin{align*}
    \min\quad &\mu \\
    \text{s.t.}\quad &a_i+k_ib_i+m_i-\mu\leq 0, \quad \forall i \in \mathcal{N}\\
    &B - \sum_{i\in\mathcal{N}} b_i = 0.
\end{align*}
This optimization problem has corresponding Lagrangian:
\begin{equation*}
    L(\mu,\mathbf{b},\mathbf{\lambda},\nu) = \mu + \sum_{i\in\mathcal{N}}\lambda_i(a_i+k_ib_i+m_i-\mu)+\nu\left(B - \sum_{i\in\mathcal{N}} b_i\right)
\end{equation*}
Using the complimentary slackness conditions, we can solve and get $\lambda_i=(1/k_i)(\sum_{i\in\mathcal{N}}1/k_i)^{-1}$. The Karush-Kuhn-Tucker (KKT) conditions state that the optimal solution $\mu^*$ to this problem must satisfy $\lambda_i(a_i+k_ib_i+m_i - \mu^*) = 0$. Since $\lambda_i$ is strictly positive, $a_i+k_ib_i+m_i = \mu^*, \forall i \in \mathcal{N}$ so $t_{compute}^i = t_{compute}^j, \forall i,j \in \mathcal{N}$.
\end{proof}

\subsection{Communication-Bottleneck Scenario}
\label{app:comm_bottleneck}
\begin{proof}
When communication is the bottleneck for all nodes, the total processing time of one batch is $\max_{i\in\mathcal{N}}\{syncStart_i + T_{comm}\}$. Since $T_{comm}$ is the same across all nodes, we can just consider $\max_{i\in\mathcal{N}}\{a_i+\gamma(k_ib_i+m_i)\}$. Using the same technique as in the previous proof, we now get that for the optimal solution $\mu^*$, $a_i+\gamma(k_ib_i+m_i) = \mu^*, \forall i \in \mathcal{N}$. Thus \textit{OptPerf} must have $syncStart_i = syncStart_j, \forall i,j\in \mathcal{N}$.
\end{proof}


\subsection{General Optimal Scenario} 
\label{app:general_opt}
\begin{proof}
Let $\mathcal{N}_1$ be the set of computation-bottleneck nodes in $\mathcal{N}$ and let $\mathcal{N}_2$ be the set of communication-bottleneck nodes in $\mathcal{N}$. Minimizing the total cluster processing time of one batch is equivalent to the following optimization problem:
\begin{align*}
    \min\quad &\mu \\
    \text{s.t.}\quad &a_i+k_ib_i+m_i-\mu\leq 0, \quad \forall i \in \mathcal{N}_1\\
    &a_i+\gamma(k_ib_i+m_i)+T_o-\mu\leq 0, \quad \forall i \in \mathcal{N}_2\\
    &B - \sum_{i\in\mathcal{N}} b_i = 0.
\end{align*}
If we construct the Lagrangian for this problem, we see that by solving the complimentary slackness equations, the coefficients $\lambda_i$ are strictly positive for all $i$. Thus the optimal solution $\mu^*$ satisfies $a_i+k_ib_i+m_i = \mu^*$ for $i\in\mathcal{N}_1$ and satisfies $a_i+\gamma(k_ib_i+m_i)+T_o = \mu^*$ for $i\in\mathcal{N}_2$, giving the desired result.
\end{proof}

\section{The GNS in Heterogeneous Clusters}
\label{app:gns}

\textbf{Theorem~\ref{thm:opt_g_and_s}}. \emph{$\mathcal{G} = \sum_{i\in\mathcal{N}}w^{\mathcal{G}}_i \mathcal{G}_i$ and $\mathcal{S} = \sum_{i\in\mathcal{N}}w^{\mathcal{S}}_i \mathcal{S}_i$ are minimum variance, unbiased linear estimators of $|G|^2$ and $\text{tr}(\Sigma)$ when:}
\begin{align*}
        \mathbf{w}^{\mathcal{G}} = \frac{\mathbf{1}^T A_{\mathcal{G}}^{-1}}{\mathbf{1}^T A_{\mathcal{G}}^{-1}\mathbf{1}},\quad \quad \mathbf{w}^{\mathcal{S}} = \frac{\mathbf{1}^T A_{\mathcal{S}}^{-1}}{\mathbf{1}^T A_{\mathcal{S}}^{-1}\mathbf{1}},\label{eq:opt_g_and_s}
    \end{align*}
\emph{where $\mathbf{1}$ is an $n$-dimensional column vector of ones and both $A_{\mathcal{G}}$ and $A_{\mathcal{S}}$ are $n\times n$ matrices with respective entries $a_{\mathcal{G}}(i,j)$ and $a_{\mathcal{S}}(i,j)$:}
    \begin{equation*}
        a_{\mathcal{G}}(i,i) = \dfrac{B+2b_i}{B^2-Bb_i}, \quad a_{\mathcal{G}}(i,j) = \dfrac{B^2-b_i^2-b_j^2}{B(B-b_i)(B-b_j)} \text{ for } i\neq j
    \end{equation*}
    \begin{equation*}
        a_{\mathcal{S}}(i,i) = \dfrac{Bb_i}{B-b_i}, \quad a_{\mathcal{S}}(i,j) = \dfrac{b_i b_j(B-b_i-b_j)}{(B-b_i)(B-b_j)} \text{ for } i\neq j
    \end{equation*}

\begin{proof}
Since $\mathcal{G}_i$ is an unbiased estimator of $|G|^2$, $\mathcal{G}$ is an unbiased estimator when $\sum_{i\in\mathcal{N}}w_i^{\mathcal{G}} = 1$. Furthermore $\mathcal{G}$ is the minimum variance, unbiased linear estimator when $\mathbf{w}$ minimizes the quadratic form of $\Sigma(\mathcal{G}_i)$, where $\Sigma(\mathcal{G}_i)$ is the correlation matrix of the estimators $\mathcal{G}_i$. Using Lagrange multipliers, we get:
\begin{equation*}
    \mathbf{w}^{\mathcal{G}} = \frac{\mathbf{1}^T \Sigma(\mathcal{G}_i)^{-1}}{\mathbf{1}^T \Sigma(\mathcal{G}_i)^{-1}\mathbf{1}}
\end{equation*}
Similarly, we get that $\mathcal{S}$ is the minimum variance, unbiased linear estimator of $\text{tr}(\Sigma)$ when:
\begin{equation*}
    \mathbf{w}^{\mathcal{S}} = \frac{\mathbf{1}^T \Sigma(\mathcal{S}_i)^{-1}}{\mathbf{1}^T \Sigma(\mathcal{S}_i)^{-1}\mathbf{1}}
\end{equation*}
where $\Sigma(\mathcal{S}_i)$ is the covariance matrix of the estimators $\mathcal{S}_i$.

To compute $\mathbf{w}^{\mathcal{G}}$, we require $\Sigma(\mathcal{G}_i)$. By definition, the matrix's diagonal elements are $\text{Var}(\mathcal{G}_i)$ and the off-diagonal elements are $\text{Cov}(\mathcal{G}_i,\mathcal{G}_j)$ for $i\neq j$. Lemma~\ref{lemma:gi_and_si_var} gives us $\text{Var}(\mathcal{G}_i)$ and Lemma~\ref{lemma:cov_gi_gj} gives us $\text{Cov}(\mathcal{G}_i,\mathcal{G}_j)$. Since all terms of $\Sigma(\mathcal{G}_i)$ have a common factor of $4|G|^2\text{tr}(\Sigma)$, this factor will cancel for $\mathbf{w}^{\mathcal{G}}$, so we can equivalently solve for $\mathbf{w}^{\mathcal{G}}$ using the matrix $A_{\mathcal{G}}$ instead of $\Sigma(\mathcal{G}_i)$, where:
    \begin{equation*}
        a_{\mathcal{G}}(i,i) = \dfrac{B+2b_i}{B^2-Bb_i}, \quad a_{\mathcal{G}}(i,j) = \dfrac{B^2-b_i^2-b_j^2}{B(B-b_i)(B-b_j)} \text{ for } i\neq j
    \end{equation*}
We can use a similar argument for $\mathcal{S}$ with Lemmas~\ref{lemma:var_g} and~\ref{lemma:cov_si_sj}, where rather than using $\Sigma(\mathcal{S}_i)$ we can use the matrix $A_{\mathcal{G}}$ with entries:
    \begin{equation*}
        a_{\mathcal{S}}(i,i) = \dfrac{Bb_i}{B-b_i}, \quad a_{\mathcal{S}}(i,j) = \dfrac{b_i b_j(B-b_i-b_j)}{(B-b_i)(B-b_j)} \text{ for } i\neq j
    \end{equation*}

\end{proof}

\begin{lemma}
    \label{lemma:gi_and_si_var}
    The estimators $\mathcal{G}_i$ and $\mathcal{S}_i$ have variances:
    \begin{align*}
        \text{Var}(\mathcal{G}_i) &= 4|G|^2\text{tr}(\Sigma)\left(\frac{B+2b_i}{B^2-Bb_i}\right)\\[0.5ex]
        \text{Var}(\mathcal{S}_i) &= 4|G|^2\text{tr}(\Sigma)\left(\frac{Bb_i}{B-b_i}\right)
    \end{align*}
    where $\text{tr}(\Sigma)$ is the sum of the variances of the individual gradient components and $|G|^2$ is the global norm of the gradient.
\end{lemma}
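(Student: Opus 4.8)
The plan is to reduce both variance computations to moment calculations for independent Gaussian vectors by exploiting the structure of the weighted average $g$. Since the data are i.i.d.\ and $g=\sum_{k\in\mathcal{N}}r_k g_k=\tfrac1B\sum_{k\in\mathcal{N}}b_k g_k$ is exactly the sample mean of all $B$ per-example gradients, I would split those $B$ samples into node $i$'s $b_i$ samples and the remaining $B-b_i$, and let $g_{-i}$ denote the mean over the latter group. This yields the identity $B g = b_i g_i+(B-b_i)g_{-i}$ in which $g_i$ and $g_{-i}$ are \emph{independent}; this is the crucial move, because it trades the correlation between $|g_i|^2$ and $|g|^2$ (they share the node-$i$ samples) for a clean two-variable problem. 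Modeling each per-example gradient as having mean $G$ and covariance $\Sigma$, the central limit theorem gives $g_i\approx\mathcal{N}(G,\Sigma/b_i)$ and $g_{-i}\approx\mathcal{N}(G,\Sigma/(B-b_i))$; write $u:=g_i-G$ and $v:=g_{-i}-G$, so that $g=G+\tfrac{b_i}{B}u+\tfrac{B-b_i}{B}v$.

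Substituting these into the definitions, the vector-level cancellation $B g-b_i g_i=(B-b_i)g_{-i}$ makes $B|g|^2-b_i|g_i|^2$ collapse to $(B-b_i)|G|^2$ plus terms that are \emph{at most quadratic} in $u,v$; dividing by $B-b_i$ expresses $\mathcal{G}_i-|G|^2$ as a quadratic in $u,v$ whose only linear piece is $2G^{\top}v$, together with $|u|^2$, $|v|^2$, and $u^{\top}v$ terms carrying explicit $b_i/B$ coefficients. A parallel substitution writes $\mathcal{S}_i-\text{tr}(\Sigma)$ as a quadratic whose linear piece is $2b_i\,G^{\top}(u-v)$. Taking expectations and invoking the identity $\mathbb{E}[|g_{est}|^2]=|G|^2+\tfrac1b\text{tr}(\Sigma)$ both re-confirms unbiasedness and fixes the mean-zero centering of the quadratic terms.

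The variances then drop out of Isserlis' theorem (Wick's theorem): $\mathbb{E}[(G^{\top}v)^2]=G^{\top}(\Sigma/(B-b_i))G$, $\text{Var}(|u|^2)=2\,\text{tr}((\Sigma/b_i)^2)$, $\mathbb{E}[(u^{\top}v)^2]=\text{tr}((\Sigma/b_i)(\Sigma/(B-b_i)))$, all odd moments vanish, and every term mixing $u$ and $v$ factors by independence. Collecting what survives, the linear piece supplies the dominant contribution, proportional to $G^{\top}\Sigma G$ with batch-size factor $\tfrac1{B-b_i}$ for $\mathcal{G}_i$ and $\tfrac{B b_i}{B-b_i}$ for $\mathcal{S}_i$, while the quadratic pieces contribute $\text{tr}(\Sigma^2)$-order corrections; under the noise model used for the gradient noise scale---where $G^{\top}\Sigma G$ is identified with $|G|^2\text{tr}(\Sigma)$ and the $\text{tr}(\Sigma^2)$ terms are subleading in the relevant regime---these combine into the stated coefficients, namely $4|G|^2\text{tr}(\Sigma)$ times $\tfrac{B+2b_i}{B^2-Bb_i}$ and $\tfrac{B b_i}{B-b_i}$. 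The same computation, run with fluctuations $u_i,u_j$ from two nodes together with the shared complement term, is what will later give $\text{Cov}(\mathcal{G}_i,\mathcal{G}_j)$ and $\text{Cov}(\mathcal{S}_i,\mathcal{S}_j)$ for Theorem~\ref{thm:opt_g_and_s}.

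The main obstacle I anticipate is the bookkeeping that collapses the raw fourth-moment expression into the exact closed form: several $\text{tr}(\Sigma^2)$-order terms of opposite sign must be shown to cancel or to be genuinely lower order, and the $b_i/B$ weights must be tracked carefully so that a factor such as $\tfrac{B+2b_i}{B^2-Bb_i}$ emerges exactly rather than up to slack. The other delicate point is making explicit which structural assumption on $\Sigma$ (isotropy / alignment with $G$, or a leading-order expansion in $\text{tr}(\Sigma)/(b|G|^2)$) licenses replacing $G^{\top}\Sigma G$ by $|G|^2\text{tr}(\Sigma)$; I would state that assumption up front, since without it the formulas remain correct but carry an extra $G^{\top}\Sigma G$ (in place of $|G|^2\text{tr}(\Sigma)$) and a residual $\text{tr}(\Sigma^2)$ term.
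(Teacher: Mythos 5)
Your route is genuinely different from the paper's. The paper treats $\mathcal{G}_i$ as a linear combination of the two scalars $|g|^2$ and $|g_i|^2$ and finishes in a few lines using bilinearity of variance together with two helper lemmas: a delta-method approximation $\text{Var}(|g_{est}|^2)\approx 4|G|^2\text{tr}(\Sigma)/b$ (Lemma~\ref{lemma:var_g}) and the covariance value $\text{Cov}(|g|^2,|g_i|^2)=4b_i|G|^2\text{tr}(\Sigma)/B^2$ (Lemma~\ref{lemma:cov_g_g_i}), the latter obtained by writing $|g|^2$ as a weighted sum of per-node squared norms and invoking independence across nodes. You instead split $Bg=b_ig_i+(B-b_i)g_{-i}$ into independent pieces and expand in Gaussian fluctuations with Wick's theorem; that machinery is sound (the linear piece of $\mathcal{G}_i$ is indeed $2G^{\top}v$ and of $\mathcal{S}_i$ is $2b_iG^{\top}(u-v)$), and your point that the exact leading term involves $G^{\top}\Sigma G$ rather than $|G|^2\text{tr}(\Sigma)$ is a fair observation that the paper's delta-method lemma implicitly glosses over.

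The genuine gap is at your final step: the terms do not "combine into the stated coefficients." Carried out faithfully, your expansion gives $\text{Var}(\mathcal{G}_i)\approx 4\,G^{\top}\Sigma G/(B-b_i)$ at leading order, which even after the identification $G^{\top}\Sigma G\mapsto|G|^2\text{tr}(\Sigma)$ equals $4|G|^2\text{tr}(\Sigma)\cdot\frac{B}{B(B-b_i)}$, not the stated $4|G|^2\text{tr}(\Sigma)\cdot\frac{B+2b_i}{B(B-b_i)}$. The missing piece, $\frac{8b_i|G|^2\text{tr}(\Sigma)}{B(B-b_i)}$, is of the same order as the leading term and cannot come from your quadratic Wick terms, which are all of order $\text{tr}(\Sigma^2)$ and have a different structure. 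The $(B+2b_i)$ numerator arises specifically from combining the delta-method variances with the paper's covariance value $4b_i|G|^2\text{tr}(\Sigma)/B^2$; your independent-split computation effectively evaluates that covariance as $\approx 4G^{\top}\Sigma G/B$ (the cross term $g_i^{\top}g_{-i}$ correlates with $|g_i|^2$ at leading order), so you necessarily land on a different constant for $\mathcal{G}_i$, even though your leading term for $\text{Var}(\mathcal{S}_i)$, namely $4b_iB\,G^{\top}\Sigma G/(B-b_i)$, does match the stated formula under the identification. To prove the lemma as written you would have to adopt the paper's two moment lemmas and the short bilinearity computation (or otherwise justify exactly those approximations); asserting that the bookkeeping collapses to the stated coefficients will not go through for $\mathcal{G}_i$ along the path you describe.
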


\begin{proof}
First we compute the variance of $\mathcal{G}_i$:
\begin{align*}
    \text{Var}(\mathcal{G}_i) = \text{Var}(\frac{B}{B-b_i}|g|^2 - \frac{b_i}{B-b_i}|g_i|^2) =\\
    \stackrel{(1)}{=} \left(\frac{B}{B-b_i}\right)^2 \text{Var}(|g|^2) + \left(\frac{b_i}{B-b_i}\right)^2 \text{Var}(|g_i|^2) - \\
    - 2\left(\frac{B}{B-b_i}\right)\left(\frac{b_i}{B-b_i}\right)\text{Cov}(|g|^2,|g_i|^2) = \\
    \stackrel{(2)}{=} \left(\frac{B}{B-b_i}\right)^2 \cdot \frac{4|G|^2\text{tr}(\Sigma)}{B} + \left(\frac{b_i}{B-b_i}\right)^2 \cdot \frac{4|G|^2\text{tr}(\Sigma)}{b_i} - \\ 
    - \frac{2Bb_i}{(B-b_i)^2} \text{Cov}(|g|^2,|g_i|^2) = \\
    \stackrel{(3)}{=} \frac{B+b_i}{(B-b_i)^2}\left(4|G|^2\text{tr}(\Sigma)\right) - \frac{2Bb_i}{(B-b_i)^2} \cdot \frac{4b_i|G|^2\text{tr}(\Sigma)}{B^2} =\\
    = 4|G|^2\text{tr}(\Sigma)\left(\frac{B+b_i}{(B-b_i)^2}-\frac{2b_i^2}{B(B-b_i)^2}\right) =\\
    = 4|G|^2\text{tr}(\Sigma)\left(\frac{B+2b_i}{B^2-Bb_i}\right)
\end{align*}
where $(1)$ follows from the variance of sums of random variables, $(2)$ follows from Lemma~\ref{lemma:var_g} and $(3)$ follows from Lemma~\ref{lemma:cov_g_g_i}. 
We can similarly compute the variance of $\mathcal{S}_i$:
\begin{align*}
    \text{Var}(\mathcal{S}_i) = \text{Var}(\frac{b_iB}{B-b_i}|g_i|^2 - \frac{b_iB}{B-b_i}|g|^2) =\\
    \stackrel{(1)}{=} \left(\frac{b_i B}{B-b_i}\right)^2 \left(\text{Var}(|g_i|^2) + \text{Var}(|g|^2) - 2\text{Cov}(|g|^2,|g_i|^2)\right) = \\
    \stackrel{(2)}{=} \left(\frac{b_i B}{B-b_i}\right)^2 \left(\frac{4|G|^2\text{tr}(\Sigma)}{b_i} + \frac{4|G|^2\text{tr}(\Sigma)}{B} - 2\text{Cov}(|g|^2,|g_i|^2) \right) = \\ 
    \stackrel{(3)}{=} 4|G|^2\text{tr}(\Sigma)\left(\frac{b_i B}{B-b_i}\right)^2\left(\frac{1}{b_i}+\frac{1}{B}-\frac{2b_i}{B}\right) = \\
    = 4|G|^2\text{tr}(\Sigma)\left(\frac{Bb_i}{B-b_i}\right)
\end{align*}
where again $(1)$ follows from the variance of sums of random variables, $(2)$ follows from Lemma~\ref{lemma:var_g} and $(3)$ follows from Lemma~\ref{lemma:cov_g_g_i}.
\end{proof}

\begin{lemma}
    \label{lemma:cov_gi_gj}
    The estimators $\mathcal{G}_i$ and $\mathcal{G}_j$ have covariance:
    \begin{equation*}
        \text{Cov}(\mathcal{G}_i,\mathcal{G}_j) = 4|G|^2\text{tr}(\Sigma) \frac{B^2-b_i^2-b_j^2}{B(B-b_i)(B-b_i)}
    \end{equation*}
\end{lemma}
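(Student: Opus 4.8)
The plan is to compute $\text{Cov}(\mathcal{G}_i,\mathcal{G}_j)$ directly from the definition $\mathcal{G}_i = \frac{1}{B-b_i}(B|g|^2 - b_i|g_i|^2)$. Expanding the bilinear form,
\begin{align*}
\text{Cov}(\mathcal{G}_i,\mathcal{G}_j) &= \frac{1}{(B-b_i)(B-b_j)}\Bigl(B^2\,\text{Var}(|g|^2) - Bb_j\,\text{Cov}(|g|^2,|g_j|^2) \\
&\quad - Bb_i\,\text{Cov}(|g_i|^2,|g|^2) + b_ib_j\,\text{Cov}(|g_i|^2,|g_j|^2)\Bigr).
\end{align*}
So the task reduces to knowing four second-moment quantities: $\text{Var}(|g|^2)$, $\text{Cov}(|g|^2,|g_i|^2)$ (and its twin with $j$), and $\text{Cov}(|g_i|^2,|g_j|^2)$ for the two \emph{disjoint} local batches on nodes $i$ and $j$. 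The first two are supplied by Lemmas~\ref{lemma:var_g} and~\ref{lemma:cov_g_g_i} already invoked in the proof of Lemma~\ref{lemma:gi_and_si_var}, namely $\text{Var}(|g|^2) = 4|G|^2\text{tr}(\Sigma)/B$ and $\text{Cov}(|g|^2,|g_i|^2) = 4b_i|G|^2\text{tr}(\Sigma)/B^2$. The one genuinely new ingredient is $\text{Cov}(|g_i|^2,|g_j|^2)$.

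First I would establish that cross-batch covariance. Because the samples in batch $i$ and batch $j$ are drawn independently (the batches are disjoint and the data are i.i.d.), $g_i$ and $g_j$ are independent, hence $\text{Cov}(|g_i|^2,|g_j|^2)=0$. Substituting this and the two known covariances into the expansion gives
\begin{align*}
\text{Cov}(\mathcal{G}_i,\mathcal{G}_j) &= \frac{4|G|^2\text{tr}(\Sigma)}{(B-b_i)(B-b_j)}\left(\frac{B^2}{B} - \frac{Bb_j\cdot b_j}{B^2} - \frac{Bb_i\cdot b_i}{B^2}\right)\\
&= \frac{4|G|^2\text{tr}(\Sigma)}{(B-b_i)(B-b_j)}\cdot\frac{B^2-b_i^2-b_j^2}{B},
\end{align*}
which is exactly the claimed formula. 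The remaining work is pure algebra: factor out $4|G|^2\text{tr}(\Sigma)$, collect over the common denominator $B$, and simplify.

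The main obstacle is justifying the covariance computations cleanly rather than the final simplification. Specifically, I must be careful that "independence of disjoint batches" really does kill $\text{Cov}(|g_i|^2,|g_j|^2)$ — this holds only under the i.i.d. sampling assumption stated for the global gradient aggregation, and it is worth noting explicitly. A secondary subtlety is that $|g|^2$ involves \emph{all} samples, including those in batches $i$ and $j$, so $\text{Cov}(|g|^2,|g_i|^2)$ genuinely depends on $b_i$; this is precisely the source of the correlation between $\mathcal{G}_i$ and $\mathcal{G}_j$ that makes the heterogeneous aggregation nontrivial, and it is handled by reusing Lemma~\ref{lemma:cov_g_g_i} exactly as in the variance lemma. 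Beyond that, the bookkeeping of the four cross terms with the coefficients $B/(B-b_i)$ and $-b_i/(B-b_i)$ is routine.
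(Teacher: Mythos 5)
Your proof is correct and follows essentially the same route as the paper's: expand the covariance bilinearly, drop the $\text{Cov}(|g_i|^2,|g_j|^2)$ term by independence of the disjoint local batches, and substitute Lemma~\ref{lemma:var_g} and Lemma~\ref{lemma:cov_g_g_i}. Your intermediate algebra is in fact cleaner than the paper's display, which contains a typo (writing $\tfrac{1}{B}$ where $B$ should appear in the first term); both computations lead to the stated formula, whose denominator $(B-b_i)(B-b_i)$ in the lemma statement is itself a typo for $(B-b_i)(B-b_j)$.
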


\begin{proof}
\small
    Using the definition of $\mathcal{G}_i$ and $\mathcal{G}_j$:
    \begin{align*}
        \text{Cov}(\mathcal{G}_i,\mathcal{G}_j) = \text{Cov}\left(\frac{B|g|^2-b_i|g_i|^2}{B-b_i},\frac{B|g|^2-b_j|g_j|^2}{B-b_j}\right) =\\
        \stackrel{(1)}{=} \frac{B^2\text{Var}(|g|^2)}{(B-b_i)(B-b_j)} - \frac{Bb_i\text{Cov}(|g|^2,|g_i|^2)}{(B-b_i)(B-b_j)} - \frac{Bb_j\text{Cov}(|g|^2,|g_j|^2)}{(B-b_i)(B-b_j)} =\\
        \stackrel{(2)}{=} \frac{4|G|^2\text{tr}(\Sigma)}{(B-b_i)(B-b_j)}\left(\frac{1}{B} - \frac{b_i^2}{B}-\frac{b_j^2}{B}\right) 
    \end{align*}
    where (1) follows from the covariance of linear combinations of random variables and the independence of $g_i$ and $g_j$, and (2) follows from Lemma~\ref{lemma:var_g} and Lemma~\ref{lemma:cov_g_g_i}.
\end{proof}

\begin{lemma}
    \label{lemma:cov_si_sj}
    The estimators $\mathcal{S}_i$ and $\mathcal{S}_j$ have covariance:
    \begin{equation*}
        \text{Cov}(\mathcal{S}_i,\mathcal{S}_j) = 4|G|^2\text{tr}(\Sigma)\frac{b_ib_j(B-b_i-b_j)}{(B-b_i)(B-b_j)}
    \end{equation*}
\end{lemma}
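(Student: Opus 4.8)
The plan is to mirror the proof of Lemma~\ref{lemma:cov_gi_gj} almost line for line, taking advantage of the fact that $\mathcal{S}_i$ has an even simpler algebraic form than $\mathcal{G}_i$: from~\eqref{eq:gi}, $\mathcal{S}_i = \frac{b_i B}{B-b_i}\bigl(|g_i|^2 - |g|^2\bigr)$, so the deterministic prefactor $\frac{b_i B}{B-b_i}$ pulls straight out of the covariance and the whole computation reduces to evaluating a covariance between two differences of squared gradient norms.

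First I would write
$$\text{Cov}(\mathcal{S}_i,\mathcal{S}_j) = \frac{b_i B}{B-b_i}\cdot\frac{b_j B}{B-b_j}\,\text{Cov}\bigl(|g_i|^2-|g|^2,\;|g_j|^2-|g|^2\bigr),$$
and then expand the inner covariance bilinearly into the four terms $\text{Cov}(|g_i|^2,|g_j|^2) - \text{Cov}(|g_i|^2,|g|^2) - \text{Cov}(|g|^2,|g_j|^2) + \text{Var}(|g|^2)$. Since node $i$ and node $j$ process disjoint sets of i.i.d.\ samples, $g_i$ and $g_j$ are independent and the first term vanishes. The two cross terms are each supplied by Lemma~\ref{lemma:cov_g_g_i} (the same lemma already invoked in the $\mathcal{G}_i$ computations), and $\text{Var}(|g|^2)$ is supplied by Lemma~\ref{lemma:var_g}.

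Substituting these, the common factor $4|G|^2\text{tr}(\Sigma)$ factors out, and the remaining bracket collapses to $\frac{1}{B} - \frac{b_i+b_j}{B^2} = \frac{B-b_i-b_j}{B^2}$. Multiplying back by the prefactor $\frac{b_i b_j B^2}{(B-b_i)(B-b_j)}$, the $B^2$ cancels against the $B^2$ in the denominator, leaving exactly $4|G|^2\text{tr}(\Sigma)\,\frac{b_i b_j(B-b_i-b_j)}{(B-b_i)(B-b_j)}$, as claimed.

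The step deserving the most care — rather than presenting genuine difficulty — is the bookkeeping of which pairs of estimates are independent: $g$ is the proportionally-weighted global estimator over all $B$ samples, so it shares node $i$'s (resp.\ node $j$'s) samples with $g_i$ (resp.\ $g_j$), which is precisely why the two $\text{Cov}(|g|^2,|g_\bullet|^2)$ terms are nonzero, whereas the disjointness of the per-node sample sets under the i.i.d.\ assumption is what forces $\text{Cov}(|g_i|^2,|g_j|^2)=0$. Once this is stated cleanly, everything else is sign-tracking and cancellation identical in spirit to Lemma~\ref{lemma:cov_gi_gj}, and can be presented in the same numbered-step style used there.
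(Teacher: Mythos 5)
Your proposal is correct and matches the paper's own proof essentially line for line: pull out the deterministic prefactors, expand the covariance bilinearly, drop $\text{Cov}(|g_i|^2,|g_j|^2)$ by independence of $g_i$ and $g_j$, and substitute Lemma~\ref{lemma:var_g} and Lemma~\ref{lemma:cov_g_g_i} for the remaining three terms. You even write the two cross terms correctly as $\text{Cov}(|g|^2,|g_i|^2)$ and $\text{Cov}(|g|^2,|g_j|^2)$, where the paper's displayed derivation contains a small typo repeating the $g_i$ term.
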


\begin{proof}
\small
    Using the definition of $\mathcal{S}_i$ and $\mathcal{S}_j$:
    \begin{align*}
        \text{Cov}(\mathcal{S}_i,\mathcal{S}_j) = \text{Cov}\left(\frac{Bb_i}{B-b_i}(|g_i|^2-|g|^2),\frac{Bb_j}{B-b_j}(|g_j|^2-|g|^2) \right) =\\
        \stackrel{(1)}{=} \frac{B^2b_ib_j}{(B-b_i)(B-b_j)}\left(\text{Var}(|g|^2) - \text{Cov}(|g|^2,|g_i|^2) - \text{Cov}(|g|^2,|g_i|^2)\right) =\\
        \stackrel{(2)}{=} \frac{4|G|^2\text{tr}(\Sigma)B^2b_ib_j}{(B-b_i)(B-b_j)}\left(\frac{1}{B} - \frac{b_i}{B^2} - \frac{b_j}{B^2}\right)
    \end{align*}
    where (1) follows from the covariance of linear combinations of random variables and the independence of $g_i$ and $g_j$, and (2) follows from Lemma~\ref{lemma:var_g} and Lemma~\ref{lemma:cov_g_g_i}.
\end{proof}

\begin{lemma}
    \label{lemma:var_g}
    For any estimated gradient $g_{est}$ with corresponding batch size $b$, the variance of the gradient norm satisfies:
    \begin{equation*}
        \text{Var}(|g_{est}|^2) \approx \frac{4|G|^2\text{tr}(\Sigma)}{b}
    \end{equation*}
\end{lemma}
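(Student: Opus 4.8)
The plan is to treat $g_{est}$ as the empirical mean of $b$ i.i.d.\ per-sample gradients $\nabla_{\theta}L_{x_j}(\theta)$, each with mean equal to the true gradient $G$ and covariance $\Sigma$ (so that $\text{tr}(\Sigma)$ is the sum of the component variances). Writing $g_{est}=G+\xi$ with $\mathbb{E}[\xi]=0$ and $\text{Cov}(\xi)=\Sigma/b$, expand the squared norm as $|g_{est}|^2=|G|^2+2G^T\xi+|\xi|^2$. The first term is deterministic, so $\text{Var}(|g_{est}|^2)=\text{Var}\!\left(2G^T\xi+|\xi|^2\right)$, and this also recovers the known mean $\mathbb{E}[|g_{est}|^2]=|G|^2+\text{tr}(\Sigma)/b$ cited earlier in the paper as a consistency check.

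Next I would isolate the leading term in $1/b$. Because $\xi$ is itself an average of $b$ independent zero-mean summands, $G^T\xi$ fluctuates on the scale $b^{-1/2}$ while $|\xi|^2$ fluctuates on the scale $b^{-1}$; hence $\text{Var}(|\xi|^2)=O(b^{-2})$, and the cross term $\text{Cov}(2G^T\xi,|\xi|^2)$ is proportional to a third central moment of $\xi$ and is likewise $O(b^{-2})$ (vanishing exactly when the batch-gradient distribution is symmetric, e.g.\ in the Gaussian approximation). The surviving $O(b^{-1})$ contribution is therefore $\text{Var}(2G^T\xi)=4\,G^T\text{Cov}(\xi)\,G=\frac{4}{b}\,G^T\Sigma G$.

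Finally I would apply the standard approximation used throughout the gradient-noise-scale literature~\cite{gns}, which treats the noise scale as an effectively scalar quantity measured along the gradient direction, i.e.\ $G^T\Sigma G\approx|G|^2\,\text{tr}(\Sigma)$; substituting this gives $\text{Var}(|g_{est}|^2)\approx\frac{4|G|^2\text{tr}(\Sigma)}{b}$, which is exactly the claimed relation. This is precisely the prefactor $4|G|^2\text{tr}(\Sigma)$ that appears in Lemma~\ref{lemma:gi_and_si_var} and Lemmas~\ref{lemma:cov_gi_gj}--\ref{lemma:cov_si_sj}, and that cancels when forming the optimal weights in Theorem~\ref{thm:opt_g_and_s}.

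The main obstacle, and the reason the statement is an approximation rather than an exact identity, is this last step together with the fourth-moment bookkeeping: the $O(b^{-2})$ bounds on $\text{Var}(|\xi|^2)$ and on the cross term are rigorous for bounded per-sample gradients, but the replacement $G^T\Sigma G\to|G|^2\,\text{tr}(\Sigma)$ requires the modeling assumption (standard in this setting) that the dominant gradient-noise direction is aligned with $G$, equivalently that the relevant dynamics are captured by a one-dimensional noise model. I would state that assumption explicitly and note that every downstream result uses $4|G|^2\text{tr}(\Sigma)$ only as a common normalizing constant, so the approximation does not affect the optimal aggregation weights of Theorem~\ref{thm:opt_g_and_s}.
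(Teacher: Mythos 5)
Your proposal is correct in substance but follows a genuinely different route from the paper. The paper's proof is a one-line error-propagation (delta-method) argument applied to the scalar $|g_{est}|$: it writes $\text{Var}(|g_{est}|^2)\approx 4\,\mathbb{E}[|g_{est}|]^2\,\text{Var}(|g_{est}|)$ and then identifies $\mathbb{E}[|g_{est}|]^2\approx|G|^2$ and $\text{Var}(|g_{est}|)\approx\text{tr}(\Sigma)/b$, i.e.\ it silently equates the variance of the norm with the total variance of the vector. You instead expand $|G+\xi|^2$ exactly, show the $O(b^{-1})$ term is $\frac{4}{b}G^T\Sigma G$ with the quartic and third-moment terms rigorously $O(b^{-2})$, and then convert $G^T\Sigma G$ into $|G|^2\text{tr}(\Sigma)$ via an explicit alignment assumption. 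What your route buys is transparency: it exposes that the generic leading-order variance is $4\,G^T\Sigma G/b$, not $4|G|^2\text{tr}(\Sigma)/b$ (a multivariate delta method applied to $v\mapsto|v|^2$ gives the same $G^T\Sigma G$ term), so the lemma's stated constant implicitly rests on the same effectively one-dimensional, gradient-aligned noise model that hides inside the paper's step $\text{Var}(|g_{est}|)\approx\text{tr}(\Sigma)/b$; note, though, that $G^T\Sigma G\approx|G|^2\text{tr}(\Sigma)$ is an extra modeling postulate rather than a standard identity in the gradient-noise-scale literature, where $\text{tr}(\Sigma)/|G|^2$ arises from setting the Hessian to the identity, so it should be stated as an assumption (as you do) rather than cited. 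Your closing observation is also correct and worth keeping: since $4|G|^2\text{tr}(\Sigma)$ (or whatever common prefactor replaces it) enters Lemma~\ref{lemma:gi_and_si_var} and Lemmas~\ref{lemma:cov_gi_gj}--\ref{lemma:cov_si_sj} only as a factor common to every entry of the covariance matrix, it cancels in the weights of Theorem~\ref{thm:opt_g_and_s}, so the approximation does not affect the optimal aggregation.
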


\begin{proof}
    Error propagation using Taylor's rule~\cite{oehlert1992note} (also known as the delta method) gives us the approximation:
    \begin{equation*}
        \text{Var}(|g_{est}|^2) \approx 4\mathbb{E}[|g_{est}|]^2 \text{Var}(|g_{est}|) = 4|G|^2\cdot \frac{1}{b}\text{tr}(\Sigma)
    \end{equation*}
\end{proof}

\begin{lemma}
    \label{lemma:cov_g_g_i}
    For local gradient $g_i$ at node $i$ with batch size $b_i$ and global gradient $g$ with batch size $B$ and computed using Eq~\eqref{eq:grad_agg}, the covariance of the two gradient norms is:
    \begin{equation*}
        \text{Cov}(|g|^2,|g_i|^2) = \frac{4b_i|G|^2\text{tr}(\Sigma)}{B^2}
    \end{equation*}
\end{lemma}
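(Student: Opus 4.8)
The plan is to exploit the weighted-averaging structure of the aggregated gradient in Eq~\eqref{eq:grad_agg} together with the independence of the data held by different nodes. Since the per-node batches are disjoint and the samples are i.i.d., I would first split the global gradient into node $i$'s contribution and the remainder, $g = r_i g_i + (1-r_i) g_{-i}$, where $g_{-i} = \frac{1}{B-b_i}\sum_{k\neq i} b_k g_k$ is the average gradient over the $B-b_i$ samples not assigned to node $i$. Then $g_{-i}$ is independent of $g_i$, we have $\mathbb{E}[g_i] = \mathbb{E}[g_{-i}] = G$, and the gradient-norm variance of any batch-of-size-$b$ estimate is controlled by Lemma~\ref{lemma:var_g}.

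Next I would expand $|g|^2 = r_i^2 |g_i|^2 + 2 r_i(1-r_i)\, g_i^{\top} g_{-i} + (1-r_i)^2 |g_{-i}|^2$ and compute $\text{Cov}(|g|^2, |g_i|^2)$ term by term. The last term drops, $\text{Cov}(|g_{-i}|^2, |g_i|^2) = 0$, by independence. For the middle term, conditioning on $g_i$ and using $\mathbb{E}[g_{-i}\mid g_i] = G$ collapses $\text{Cov}(g_i^{\top} g_{-i}, |g_i|^2)$ to $\text{Cov}(G^{\top} g_i, |g_i|^2)$; writing $g_i = G + \delta_i$ with $\delta_i$ mean zero and invoking the same first-order delta-method approximation used in the proof of Lemma~\ref{lemma:var_g}, under which only the leading contribution of the gradient noise is retained and higher-moment and cross terms are discarded, one shows this term does not contribute at that order, leaving $\text{Cov}(|g|^2, |g_i|^2) \approx r_i^2\, \text{Var}(|g_i|^2)$. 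Substituting $\text{Var}(|g_i|^2) \approx \frac{4|G|^2\text{tr}(\Sigma)}{b_i}$ from Lemma~\ref{lemma:var_g} and $r_i = b_i/B$ gives $\text{Cov}(|g|^2, |g_i|^2) \approx \frac{b_i^2}{B^2}\cdot\frac{4|G|^2\text{tr}(\Sigma)}{b_i} = \frac{4 b_i |G|^2 \text{tr}(\Sigma)}{B^2}$, the claimed expression.

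I expect the cross term $\text{Cov}(g_i^{\top} g_{-i}, |g_i|^2)$ to be the crux: unlike the fully independent term it retains a dependence on $g_i$, so arguing that it is negligible must be done carefully and within the same approximation regime as Lemma~\ref{lemma:var_g} (approximately Gaussian per-sample gradient noise, sufficiently large batches, and a signal-to-noise ratio high enough that fluctuations of $g_i$ along the direction of $G$ are subdominant). The remaining ingredients --- the split, the vanishing of the independent term, and the final substitution --- are routine variance and covariance algebra.
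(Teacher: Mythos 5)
Your overall route is the same as the paper's: split $|g|^2$ into node $i$'s contribution plus a remainder built from the other nodes' samples, use independence to kill the remainder, reduce to $r_i^2\,\text{Var}(|g_i|^2)$, and finish with Lemma~\ref{lemma:var_g}. The genuine gap is in the step you yourself flag as the crux. After conditioning, the cross term reduces to $\text{Cov}(G^{\top}g_i,|g_i|^2)$; writing $g_i=G+\delta_i$ with $\text{Cov}(\delta_i)=\Sigma/b_i$ gives $\text{Cov}\bigl(G^{\top}\delta_i,\;2G^{\top}\delta_i+|\delta_i|^2\bigr)=2\,\text{Var}(G^{\top}\delta_i)+\text{Cov}(G^{\top}\delta_i,|\delta_i|^2)\approx 2\,G^{\top}\Sigma G/b_i$. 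This is not a higher-order effect: the fluctuation of $g_i$ along the direction of $G$ is exactly the contribution that the delta method in Lemma~\ref{lemma:var_g} retains, since the leading term of $\text{Var}(|g_i|^2)$ is $4\,\text{Var}(G^{\top}\delta_i)\approx 4\,G^{\top}\Sigma G/b_i$. Hence the middle term contributes $2r_i(1-r_i)\cdot 2G^{\top}\Sigma G/b_i = 4(B-b_i)\,G^{\top}\Sigma G/B^2$, compared with $4b_i\,G^{\top}\Sigma G/B^2$ from the term you keep; it is of the same order in general and dominant whenever $b_i\ll B$. No signal-to-noise assumption rescues the claim, because both terms scale identically with $G$ and $\Sigma$ and with $1/b_i$, so the assertion that the cross term "does not contribute at that order" is a step that would fail. (Indeed, a direct first-order expansion of both norms gives $\text{Cov}(|g|^2,|g_i|^2)\approx 4\,G^{\top}\text{Cov}(g,g_i)\,G = 4\,G^{\top}\Sigma G/B$, confirming the cross term cannot be discarded within this approximation.)

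For comparison, the paper's proof never confronts this term: it expands $|g|^2$ as $\frac{b_i^2}{B^2}|g_i|^2+\frac{1}{B^2}\sum_{j\notin b_i}(\nabla_{\theta}L_{x_j}(\theta))^2$, i.e., it drops all cross products between node-$i$ and non-node-$i$ per-sample gradients at the outset, after which the remainder is independent of $g_i$ and the covariance collapses immediately to $\frac{b_i^2}{B^2}\text{Var}(|g_i|^2)$. So the quantity you are trying to argue away is precisely the quantity the paper's decomposition omits. To reach the stated formula you would have to either adopt that decomposition (and justify discarding the cross products) or supply a genuine argument for why $\text{Cov}(g_i^{\top}g_{-i},|g_i|^2)$ is negligible; your proposal currently contains neither, so the proof as written does not go through.
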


\begin{proof}
    The global gradient norm $|g|^2$ can be written in terms of $g_i$ and non-$g_i$ components:
    \begin{align*}
        |g|^2 &= \frac{1}{B^2}\sum_{j\in b_i} (\nabla_{\theta}L_{x_j}(\theta))^2 + \frac{1}{B^2}\sum_{j\notin b_i} (\nabla_{\theta}L_{x_j}(\theta))^2 =\\
        &= \frac{b_i^2}{B^2}|g_i|^2 + \frac{1}{B^2}\sum_{j\notin b_i} (\nabla_{\theta}L_{x_j}(\theta))^2
    \end{align*}
    Rewriting the covariance using this expression:
    \begin{align*}
        \text{cov}(|g|^2,|g_i|^2) = \text{cov}(\frac{b_i^2}{B^2}|g_i|^2 + \frac{1}{B^2}\sum_{j\notin b_i} (\nabla_{\theta}L_{x_j}(\theta))^2,|g_i|^2) =\\
        \stackrel{(1)}{=} \frac{b_i^2}{B^2}\text{cov}(|g_i|^2,|g_i|^2) + \frac{1}{B}\text{cov}(\sum_{j\notin b_i} (\nabla_{\theta}L_{x_j}(\theta))^2,|g_i|^2) = \\
        \stackrel{(2)}{=} \frac{b_i^2}{B^2}\text{Var}(|g_i|^2) \stackrel{(3)}{=} \frac{b_i^2}{B^2} \cdot \frac{4|G|^2\text{tr}(\Sigma)}{b_i} = \frac{4b_i|G|^2\text{tr}(\Sigma)}{B^2}
    \end{align*}
    where $(1)$ follows from the covariance of linear combinations of random variables, $(2)$ follows from the variance-covariance relationship and the independence of $g_i$ and $g_j$ for $i\neq j$ and $(3)$ follows from Lemma~\ref{lemma:var_g}.
\end{proof}









\end{document}